\documentclass[prx,10pt,twocolumn,amsmath,amssymb,nofootinbib,superscriptaddress]{revtex4-2}
\usepackage[british]{babel}

\usepackage[dvipsnames]{xcolor}
\definecolor{myblue}{RGB}{34,31,150}

\definecolor{mygreen}{RGB}{34,150,31}

\usepackage[T1]{fontenc}
\usepackage{graphicx}
\usepackage{bm}
\usepackage{verbatim}
\usepackage{times}
\usepackage{url}
\usepackage{multirow}
\usepackage{mathrsfs}
\usepackage{physics}
\usepackage{mathtools}
\usepackage{bbm}
\usepackage{amsthm}

\usepackage[breaklinks=true,colorlinks=true,linkcolor=myblue,urlcolor=myblue,citecolor=myblue]{hyperref}
\usepackage[capitalise]{cleveref}
\usepackage{orcidlink}

\newtheorem{theorem}{Theorem}

\begin{document}

\title{Closed-form Bayesian quantum estimation of Gaussian states}

\author{Edward Gandar\,\orcidlink{0000-0002-4328-1979}}
\email{Contact author: gandaredward@gmail.com}
\affiliation{Department of Physics and Astronomy, University of Exeter, Stocker Road, Exeter EX4 4QL, United Kingdom}
\affiliation{Secci\'{o}n de F\'{i}sica, Facultad de Ciencias, Universidad de La Laguna, 38203 La Laguna, Spain}
\author{Jes\'{u}s Rubio\,\orcidlink{0000-0002-8193-8273}}
\affiliation{Department of Physics and Astronomy, University of Exeter, Stocker Road, Exeter EX4 4QL, United Kingdom}

\begin{abstract}
Bayesian quantum estimation provides a robust formalism for quantum technologies, particularly in scenarios with limited data and minimal prior information.
Yet, its application to continuous-variable systems has remained limited and largely numerical due to the difficulty of the underlying parameter integrals.
Here, we introduce a variational framework that restricts the optimisation over all measurements and estimators to a finite-dimensional subspace, reducing the problem to a linear system of equations with closed-form solutions. 
These solutions have a geometric interpretation as orthogonal projections of the global optimum onto the chosen subspace.
For Gaussian states, a natural subspace is spanned by operators polynomial in the canonical quadratures, whose moments are computable from the first and second moments of the state.
We further derive a necessary and sufficient condition for global optimality, and show that post-processing with the posterior mean, rather than using the estimator obtained from the subspace, improves performance towards the global optimum.
Through single-shot examples, we show that the framework yields experimentally feasible strategies based on Gaussian operations and quadrature measurements that are either optimal or near-optimal.

\end{abstract}

\maketitle

% Body
\section{Introduction}

Quantum estimation theory (QET) underpins quantum metrology~\cite{demkowicz2015quantum, demkowicz2020multiparameter, albarelli2020perspective, barbieri2022optical}, where resources such as squeezing and entanglement enable precision beyond classical limits~\cite{giovanetti2006quantum, giovannetti2011advances}.
Traditionally, QET has been formulated within a frequentist framework~\cite{helstrom1969quantum,holevo1982testing,braunstein1994statistical}, where performance is characterised via the Fisher information and the Cram\'{e}r--Rao bound.
Such approaches often become effective only once the parameter has been sufficiently localised, either through prior information or preliminary measurements, thereby restricting their applicability in genuinely global or finite-sample regimes~\cite{tsang2012zivzakai,rubio2019quantum,rubio2020bayesian,valeri2020experimental,meyer2023quantum,gebhart2024fundamental,fradil2026a,overton2026adaptive}.

The Bayesian formulation provides an alternative in which prior information is incorporated from the outset and performance is defined globally through a loss function averaged over both the prior and measurement outcomes \cite{jaynes2003probability}.
Early works by Personick showed that optimal measurements can, for the mean square loss (MSL), be constructively derived from an operator equation of Lyapunov type~\cite{personick1970efficient,personick1971application}, and more recent developments have exploited this perspective in finite-sample \cite{rubio2019quantum} or single-shot \cite{mashide2002optimal,bernad2018optimal, zhou2023bayesian,zhou2024bayesian} settings, multiparameter schemes \cite{rubio2020bayesian,demkowicz2020multiparameter,sidhu2020geometric, zhang2024bayesian,albarelli2025measurement}, and symmetry-informed metrologies~\cite{rubio2022quantum,rubio2024first,boeyens2025role}.
Despite this progress, Bayesian QET can be more demanding computationally than its local counterpart, in part because optimisation depends on integrated, prior-weighted state operators rather than on local derivatives.

This difficulty is further exacerbated in continuous-variable (CV) systems, which are central to quantum technologies~\cite{ferraro2005gaussian, weedbrook2012gaussian, adesso2014continuous, serafini2017quantum, miao2019quantum}, due to their infinite-dimensional Hilbert spaces.
In this context, Gaussian states are widely used due to their simplicity and experimental accessibility, being fully characterised by first and second moments and generated by quadratic Hamiltonians and linear operations.
They underpin a range of quantum platforms, including implementations in grav\-i\-ta\-tion\-al-wave detectors~\cite{demkowicz2013fundamental,miao2019quantum,jia2024squeezing}.
But while Gaussian metrology has been extensively developed within local QET~\cite{berry2000optimal,monras2006optimal,monras2007optimal,genoni2011optical,monras2013phase,lang2013optimal,gagatsos2016gaussian,branford2019quantum,oh2019optimal,vsafranek2019estimation,bressanini2024multi,chang2025multiparameter,chattopadhyay2025generic,chattopadhyay2025non}, techniques for its development within the Bayesian framework remain fragmented and comparatively unexplored.

Existing Bayesian approaches to Gaussian metrology can be broadly grouped into (semi-)analytical solutions for specific systems \cite{wang2007quantum, olivares2009bayesian, gammelmark2012bayesian, kiilerich2016bayesian, morelli2021bayesian, zhou2023bayesian, zhou2024bayesian}---which often do not readily extend to more general families of states or measurements---or purely numerical approaches~\cite{rubio2019quantum, nichols2019designing, kundu2021machine, fallani2022learning, ravell2025knowledge,jager2026provable}.
This reflects the aforementioned computational bottleneck.
There is therefore a need for a formulation of Bayesian optimisation in CV metrology that enables the problem to be treated in a systematic and, where possible, analytical manner.
    
In this work, we introduce a variational framework for Bayesian estimation of CV systems.
By restricting the operator that encodes the measurement-estimator pair to a finite-dimensional subspace of Hermitian operators, the problem reduces to a linear system of equations whose solution fully determines the optimal strategy within the subspace.
For Gaussian states and subspaces spanned by polynomials in the quadratures, the coefficients of this linear system are determined by the first and second moments of the state.
We derive closed-form expressions for the associated minimum MSL, establish a geometric interpretation as an orthogonal projection of the global optimum, and give a phase-space criterion characterising when this projection is exact, so that the constrained strategy is globally optimal.
We further show that replacing the constrained estimator with the posterior mean (PM) for the same measurement can only improve, or leave unchanged, the MSL, bringing the performance closer to the global optimum.

This paper is organised as follows.
\cref{sec:preliminaries} reviews CV systems, Gaussian states, and Bayesian QET.
\cref{sec:constrained-optimisation} develops the constrained optimisation framework, giving the linear system and its closed-form solution, the measurements and estimators it admits, and its performance relative to the global optimum.
\cref{sec:examples} applies the framework to displacement and squeezing estimation, recovering known optimal strategies and obtaining near-optimal, experimentally feasible solutions in more general settings.
Finally, \cref{sec:discussion} discusses limitations and outlook.

\vfill

\section{Preliminaries}
\label{sec:preliminaries}

We begin by revisiting key aspects of continuous variable systems, Gaussian states, and Bayesian QET, before presenting our main results in \cref{sec:constrained-optimisation}.
Readers familiar with these topics may proceed directly to that section.

\subsection{Continuous variable systems and Gaussian states}

The paradigmatic example of a CV system is a collection of $N$ bosonic modes with quadrature operators \cite{barnett2002methods,schleich2011quantum,scully1997optics,walls2008quantum}
\begin{equation}
    \hat{\mathbf R} = (\hat q_1,\hat p_1,\ldots,\hat q_N,\hat p_N)^\top,
    \quad
    [\hat R_i,\hat R_j] = i \Omega_{ij},
\end{equation}
where
\begin{equation}
    \hat q_k = \frac{\hat a_k+\hat a_k^\dagger}{\sqrt{2}}, 
    \quad 
    \hat p_k = \frac{\hat a_k-\hat a_k^\dagger}{\sqrt{2}i},
    \quad
    \Omega = \bigoplus_{k=1}^N 
    \begin{pmatrix}
        0 & 1\\
        -1 & 0
    \end{pmatrix},
\end{equation}
with \(\Omega\) the symplectic form \cite{weedbrook2012gaussian,ferraro2005gaussian}.

Any density operator $\rho$\footnote{
In the following, we reserve hats for operators directly associated with canonical coordinates.
} acting on this Hilbert space $\mathcal{H}=L^2(\mathbbm{R}^N)$ admits an equivalent representation as a quasi-probability distribution on the $2N$-dimensional phase space \cite{adesso2014continuous,serafini2017quantum,folland2016harmonic}.
A convenient representation is the characteristic function
\begin{equation}
    \label{eq:characteristic_function}
    \chi_\rho(\vb*{\xi}) = \mathrm{Tr}[\rho D(\vb*{\xi})],
\end{equation}
with Weyl displacement operator\footnote{
For a single mode, the displacement operator is usually written in terms of $\hat{a}$ and $\hat{a}^\dagger$ as $\hat{D}(\alpha)=\mathrm{exp}(\alpha \hat{a}^\dagger - \alpha^* \hat{a})$, with $\alpha \in \mathbbm{C}$. 
It is straightforward to verify that this is equivalent to \cref{eq:Weyl_operator} with $\alpha=\frac{1}{\sqrt{2}}(\xi_q + i \xi_p)$ and $\vb*{\xi} \in \mathbbm{R}^{2}$ \cite{serafini2017quantum}.
}
\begin{equation}
    \label{eq:Weyl_operator}
    D(\vb*{\xi})=\mathrm{exp}(i \hat{\vb{R}}^\top \Omega \vb*{\xi})
\end{equation}
and phase-space coordinates $\vb*{\xi} \in \mathbbm{R}^{2N}$. 
The completeness property \cite{serafini2017quantum,folland2016harmonic} of $D(\vb*{\xi})$ implies a bijective mapping $\rho \mapsto \chi_\rho (\vb*{\xi})$, i.e., every density operator corresponds to a unique characteristic function and vice versa.
The Wigner function of $\rho$ is then related to $\chi_\rho(\vb*{\xi})$ via a symplectic Fourier transform \cite{adesso2014continuous,weedbrook2012gaussian}
\begin{equation}
    \label{eq:Wigner_func_characteristic_func_FT}
    W_\rho(\vb{r})=\int \frac{\dd{\vb*{\xi}}}{\left(2\pi\right)^{2N}} e^{-i \vb{r}^\top \Omega \vb*{\xi}} \chi_\rho(\vb*{\xi}),
\end{equation}
with dual phase-space coordinates $\vb{r} \in \mathbbm{R}^{2N}$. 
More generally, $W_X(\vb{r})$ denotes the Weyl symbol of any operator $X$ \cite{folland2016harmonic,de2011symplectic}; for $X=\rho$, this coincides with the Wigner function.

A state $\rho$ is defined to be Gaussian when the Wigner function is a Gaussian function in $\vb{r}$, i.e.,
\begin{equation}
    \label{eq:Gaussian_Wigner_func}
    W_\rho(\vb{r})=\frac{1}{(2\pi)^N \sqrt{\det V}} \exp \left(-\frac{1}{2}(\vb{r}-\overline{\vb{r}})^{T} V^{-1}(\vb{r}-\overline{\vb{r}})\right),
\end{equation}
or equivalently the characteristic function takes the form
\begin{equation}
    \label{eq:characteristic_func_general_Gaussian}
    \chi_\rho(\vb*{\xi})=\exp[ -\frac{1}{2} \vb*{\xi}^\top (\Omega V \Omega^\top) \vb*{\xi} -i(\Omega \overline{\vb{r}})^\top \vb*{\xi}].
\end{equation}
One then sees that a Gaussian state is fully parametrised by its first moments $\overline{\vb{r}}=\mathrm{Tr}\, (\rho\hat{\vb{R}}) \coloneqq \langle{\hat{R}}\rangle_{\rho}$ and covariance matrix $V$ \cite{adesso2014continuous,serafini2017quantum,weedbrook2012gaussian}, with components 
\begin{equation}
    \label{eq:covariance_matrix}
    V_{ij}=\frac{1}{2}\langle{\hat{R}_i \hat{R}_j +\hat{R}_j \hat{R}_i}\rangle_{\rho} -\langle{\hat{R}_i}\rangle_{\rho} \langle{\hat{R}_j}\rangle_{\rho}.
\end{equation}
In moving to Bayesian QET in the next section, we consider a family of states $\rho(\theta)$ encoding an unknown parameter $\theta$.
For Gaussian states, this means $\theta$ is encoded in the mean vector and/or covariance matrix of the Wigner function, $\overline{\vb{r}}(\theta),\, V(\theta)$.

\subsection{Bayesian QET}\label{sec:bayesian-qet}

In Bayesian QET, any \emph{a priori} information about $\theta$ is encoded in a prior distribution $p(\theta)$.
Upon performing a measurement with outcome $x$, $p(\theta)$ is updated to the posterior distribution $p(\theta|x) \propto p(\theta) p(x|\theta)$ via Bayes' rule, using the likelihood $p(x|\theta)$ of obtaining such a result \cite{kay1993fundamentals, jaynes2003probability}.

While the posterior encodes all the information about $\theta$, a point estimator $\tilde\theta(x)$ for the unknown parameter provides a useful summary.
Constructing estimators requires a loss function $\ell(\tilde\theta,\theta)$ assigning a cost to deviations of the estimate from each possible value for the unknown parameter.
For instance, squared and absolute losses yield the posterior mean (PM) and median estimators, respectively \cite{boeyens2021uninformed}.
Then, given a state $\rho(\theta)$, a prior $p(\theta)$, and a positive operator-valued measure (POVM) $M(x)$ satisfying $M(x)\succeq 0$ and $\int \dd{x} M(x) = \mathbbm{1}$, the MSL over all possible $x$ and $\theta$ is 
\begin{equation}
    \label{eq:average_information_error}
    \mathcal{L}=  \int \dd{\theta} \dd{x}  p(\theta) p(x|\theta) \ell[\tilde{\theta}(x),\theta],
\end{equation}
where the likelihood $p(x|\theta)=\mathrm{Tr}[M(x)\rho(\theta)]$ is calculated using the Born rule. 

A broad and unifying class of loss functions is the \emph{location-isomorphic} family \cite{rubio2024first}
\begin{equation}
    \label{eq:quadratic_loss_f}
    \ell(\tilde\theta, \theta)=[f(\tilde\theta)-f(\theta)]^2, 
 \end{equation} 
where $f\hspace{-0.2em}: \theta \mapsto f(\theta)$ maps $\theta$ to a location parameter, i.e., associated with invariance under translations $f(\theta')=f(\theta) + \gamma$, $\gamma \in \mathbbm{R}$ . 
For example, $f(\theta)=\theta$ and $f(\theta)=\log \theta$ are suitable for problems with invariance under translations and rescaling, respectively~\cite{jaynes2003probability, rubio2022quantum}.
Alternatively, loss functions can be derived from information geometry, based on the statistical manifold induced by either \(\rho(\theta)\) or \(p(x|\theta)\) \cite{amari2000methods,braunstein1994statistical,jorgensen2022bayesian, boeyens2025role}.

For \cref{eq:quadratic_loss_f}, the MSL in \cref{eq:average_information_error} can be written as
\begin{equation}
    \label{eq:average_info_error_general_POVM}
    \mathcal{L} = \lambda + \mathrm{Tr}(\rho_0 \mathcal{M}_{2}) - 2 \mathrm{Tr}(\bar{\rho} \mathcal{M}_{1}),
\end{equation}
with \(\lambda \coloneq \int \dd{\theta}  p(\theta) f(\theta)^2\), \(k\)-th operator moment
\begin{equation}
    \label{eq:Mk_operator_moments}
    \mathcal{M}_{k} = \int \dd{x} M(x) f[\tilde{\theta}(x)]^k,
\end{equation}
and zeroth and first state moments
\begin{subequations}
    \label{eq:prior_averaged_state_moments}
\begin{align}
    \rho_0 &:= \int \dd \theta \, p(\theta) \rho(\theta), \\
    \bar{\rho} &:= \int \dd \theta \, p(\theta) \rho(\theta) f(\theta).
\end{align}
\end{subequations}
Note that $\rho_0$ is always a valid density matrix,\footnote{If, for each $\theta$, \(\rho(\theta) \succeq 0\), \(\mathrm{Tr}[\rho(\theta) ] = 1\), $p(\theta) \geq 0$ and $\int \dd{\theta} p(\theta) =1$, then, for any $\ket{\psi}$, we have $\mel{\psi}{\rho_0}{\psi}=\int \dd{\theta}p(\theta) \mel{\psi}{\rho(\theta)}{\psi} \geq 0$ and $\mathrm{Tr}(\rho_0) = \int \dd{\theta} p(\theta) \mathrm{Tr}[\rho(\theta)]=1$.
Therefore, $\rho_0$ is always a valid density matrix.} since it is a convex combination of other valid density matrices $\rho(\theta)$, but is not guaranteed to be Gaussian.\footnote{$\rho_0$ is Gaussian if the Weyl symbol $W_{\rho_0}(\vb{r})=\int \dd{\theta} p(\theta) W_{\rho(\theta)}(\vb{r})$ is a Gaussian function in $\vb{r}$. While this is not satisfied in general, this is true if the prior is Gaussian, the mean is linear $\bar{\vb*{r}}(\theta)=\bar{\vb*{r}}_0 + \theta \vb{d}$, and the covariance matrix $V(\theta)=V_0$ is constant (as we will see in \cref{sec:1d_displacement}).} 

Jensen's operator inequality $\mathcal{M}_{2}- \mathcal{M}_{1}^2 \succeq 0$ leads to the lower bound 
\begin{equation}
    \label{eq:average_info_error_projective_POVM}
    \mathcal{L} \geq  \lambda + \mathrm{Tr}(\rho_0 \mathcal{M}^2_{1}) - 2 \mathrm{Tr}(\bar{\rho} \mathcal{M}_{1})
    \eqqcolon \mathcal{L}(\mathcal{M}_{1}),
\end{equation}
which is saturated with a projection-valued measure (PVM) \(M(x) = \mathcal{P}(x)\), where $\mathcal{P}(x)\mathcal{P}(x') \to  \delta(x-x')\mathcal{P}(x')$, and the operator $\mathcal{S}$ defined in \cref{eq:lyapunov} below provides the optimal $\mathcal{M}_1$ achieving this bound~\cite{macieszczak2014bayesian,rubio2022quantum}.

Since \cref{eq:average_info_error_projective_POVM} depends only on the single Hermitian operator $\mathcal{M}_{1}$, the minimum MSL over all POVMs and estimators, for a fixed prior $p(\theta)$, state $\rho(\theta)$ and loss function, can be straightforwardly calculated to be 
\begin{equation}
    \label{eq:MSL_min_global}
    \mathcal{L}(\mathcal{S}) \coloneq \lambda - \mathrm{Tr}(\rho_0 \mathcal{S}^2),
\end{equation}
using variational calculus \cite{personick1971application, rubio2024first}. 
As seen, the minimum MSL is the difference between $\lambda$, determined by the prior and loss function, and the information pseudo-gain $\mathrm{Tr}(\rho_0 \mathcal{S}^2)$ \cite{albarelli2025measurement}, which depends, in addition, on the state.
Here, $\mathcal{S}$ solves the Lyapunov equation
\begin{equation}
    \label{eq:lyapunov}
    \mathcal{S}\rho_0 + \rho_0 \mathcal{S} =2 \bar{\rho},
\end{equation}
and is known as the \emph{symmetric posterior mean (SPM) operator}, in analogy with the symmetric logarithmic derivative (SLD) in local QET \cite{albarelli2025measurement, demkowicz2020multiparameter}.

The SPM operator encodes both the optimal measurement and estimator.
The eigendecomposition 
\begin{equation}
    \label{eq:eigendecomposition_SPM}
    \mathcal{S} = \int \dd{s} \mathcal{P}(s) s,
\end{equation}
then defines the optimal strategy \cite{rubio2024first}: the optimal measurements are projectors $\mathcal{P}(s)$ onto the eigenspace of $\mathcal{S}$, and the outcomes $s$ are post-processed by the estimator $\tilde{\theta}(s)= f^{-1} (s)$.
By virtue of \cref{eq:lyapunov}, the eigenvalue $s$ satisfies \cite{rubio2022quantum}
\begin{equation}
    \label{eq:eigenvalues_SPM_posterior_mean}
    s = \frac{\mathrm{Tr}[\mathcal{P}(s) \bar{\rho}]}{\mathrm{Tr}[\mathcal{P}(s) \rho_0]}= \int \dd{\theta} p(\theta | s) f(\theta),
\end{equation}
from where one recovers the classical Bayes estimator
\begin{equation}
    \label{eq:posterior_mean_estimator}
    \tilde{\theta}(s) = f^{-1} \left[ \int \dd{\theta} p(\theta | s) f(\theta)  \right]. 
\end{equation}
For $f(\theta)=\theta$, \cref{eq:posterior_mean_estimator} reduces to the PM estimator \cite{jaynes2003probability}.  

A key advantage of Bayesian QET is that any measurement $\mathcal{P}(x)$ and estimator $\tilde{\theta}(x)$ pair giving rise to the minimum \cref{eq:MSL_min_global} is \emph{globally} optimal, in the sense that the loss function has been minimised on average taking into account all possible parameter values.
This stands in contrast to the parameter-dependent nature of local QET \cite{demkowicz2015quantum}, notwithstanding the formally analogous role of SPM operators and SLDs as the generators of the optimal strategy.

\section{Constrained optimisation of the MSL}
\label{sec:constrained-optimisation}

\cref{eq:lyapunov} provides a complete solution to the search for optimal estimation strategies under the square loss criterion in \cref{eq:quadratic_loss_f}.
However, since $\mathcal{S}$ here acts on an infinite-dimensional Hilbert space, its spectral measurement must generally be obtained numerically via a Fock-space truncation, whose dimension---and hence computational cost---grows with the required accuracy.
Furthermore, the resulting strategy will depend on the truncation level and does not, in general, admit a closed-form description in terms of standard pre-processing and measurements, making it difficult to assess what experiment to implement.
These challenges motivate a different approach.

Rather than directly optimising the MSL in \cref{eq:average_info_error_projective_POVM} over all measurements and estimators, i.e. over all Hermitian $\mathcal{M}_1$, we restrict $\mathcal{M}_1$ to a finite subspace of Hermitian operators.
As we shall see, this procedure yields a convex quadratic optimisation problem in the expansion coefficients, whose solution is given by a linear system of equations.
As in the full SPM construction, the associated estimation strategy is defined by the spectral decomposition of the resulting constrained analogue of the SPM operator.

\subsection{Constrained optimum}
Let $\mathcal{V}= \operatorname{span}( B_1, \ldots, B_d )$ be a finite-dimensional real subspace of Hermitian operators acting on $\mathcal{H}$, with $\{B_i\}_{i=1}^d$ a fixed Hermitian operator basis.
We restrict the operator $\mathcal{M}_1$ to belong to $\mathcal{V}$, and we write
\begin{equation}
    \label{eq:M_constrained}
    \mathcal{M}_1 = \sum_{i=1}^d \alpha_i B_i, \quad \alpha_i \in \mathbbm{R}.
\end{equation}

Up to an additive constant, minimising the MSL in \cref{eq:average_info_error_projective_POVM} amounts to minimising the functional
\begin{equation}
    \label{eq:J-functional}
    J(\mathcal{M}_{1}) 
    \coloneqq \mathrm{Tr}(\rho_0 \mathcal{M}^2_{1}) - 2\mathrm{Tr}(\bar{\rho} \mathcal{M}_{1}),
\end{equation}
which, for \cref{eq:M_constrained}, becomes
\begin{equation}
    \label{eq:MSL_constrained_alpha}
    \begin{aligned}
    J(\vb*{\alpha}) &= \sum_{i,j} \alpha_i \alpha_j \mathrm{Tr}(\rho_0 B_i B_j) - 2\sum_i \alpha_i \mathrm{Tr}(\bar{\rho} B_i)\\
    & = \vb*{\alpha}^\top G \vb*{\alpha} - 2 \vb{b}^\top \vb*{\alpha},
    \end{aligned}
\end{equation}
where \( \vb*{\alpha} = (\alpha_1, \dots, \alpha_d)^\top \), \(\vb{b} = (b_1, \dots, b_d)^\top\) and
\begin{equation}
    \label{eq:G_and_b_def}
    G_{ij}\coloneq \frac{1}{2}\mathrm{Tr}\,\big(\rho_0\{ B_i,B_j \}\big),
    \quad
    b_i \coloneq \mathrm{Tr}(\bar{\rho} B_i).
\end{equation}
Here, $\{ \cdot, \cdot \}$ denotes the anti-commutator.

Throughout, we assume the traces defining $G$ and \(\boldsymbol{b}\) to be finite, so that the above functional is well-defined.
A natural choice for Gaussian states is to take the $B_i$ to be polynomials in the quadratures $\hat{q}$ and $\hat{p}$.\footnote{
Note that the only requirement on the $B_i$ is that they are Hermitian, since this ensures that $\mathcal{M}_1$ is Hermitian and that $G$ is real and positive semidefinite. For polynomial bases, Weyl (symmetric) ordering is a natural choice, but any Hermitian basis is equally valid.
For example, the operator basis built from the first $n$ Fock states $\{\ket{j}\bra{k} + \ket{k}\bra{j},i(\ket{j}\bra{k}-\ket{k}\bra{j}) \}$, for $ 0 \leq j\leq k \leq n-1$, recovers the Fock-space truncation of $\mathcal{S}$, usually calculated numerically.
\label{footnote:operator_basis}
} 
The moments $\mathrm{Tr}[\rho(\theta) B_i]$ and $\mathrm{Tr}[\rho(\theta) \{B_i,B_j\}]$ are then finite and, by Wick's (Isserlis') theorem, can be computed entirely from the first and second moments $\vb{r}(\theta)$ and $V(\theta)$ \cite{serafini2017quantum,isserlis1918formula}.
Convergence of the prior-weighted integrals then depends on how these moments grow with the encoded parameter $\theta$ and on the tail behaviour of the prior. 
This holds for all examples in this work, but must otherwise be verified on a case-by-case basis.

Since $J(\vb*{\alpha})$ in \cref{eq:MSL_constrained_alpha} is a convex quadratic functional, its stationary points satisfy
\begin{equation}
    \label{eq:S_constrained_opt_solutions}
    G \vb*{\alpha}^\text{opt}=  \vb{b}.
\end{equation}
This follows by taking the gradient $\nabla_{\vb*{\alpha}} J(\vb*{\alpha}) = 2 G \vb*{\alpha} -2\vb{b}$ and imposing $\nabla_{\vb*{\alpha}} J(\vb*{\alpha}^\text{opt})=0$. 
The Hessian is $\nabla^2_{\vb*{\alpha}} J(\vb*{\alpha})=2G$, which implies that $J(\vb*{\alpha}^\text{opt})$ is a global minimum, since $G$ is positive semidefinite by construction. 
Indeed, $\vb*{\alpha}^\top G \vb*{\alpha} = \mathrm{Tr}(\rho_0 \mathcal{M}^2_{1})\geq 0$ for all $\boldsymbol{\alpha}$, with $\rho_0 \succeq 0$ and $\mathcal{M}_{1}=\mathcal{M}_{1}^\dagger$.
Note that \cref{eq:S_constrained_opt_solutions} can also be proven using the explicitly variational approach of \citet{personick1970efficient, personick1971application}, as shown in \cref{appendix:alternate_proof_of_constrained_optimum}.

The uniqueness of this minimum is determined by the structure of $G$.
First, if $G$ is positive definite, it is invertible and \cref{eq:S_constrained_opt_solutions} admits a unique solution $\vb*{\alpha}^\text{opt}= G^{-1} \vb{b}$.
If $G$ is singular, the inverse does not exist, but convexity ensures that the minimum of $J(\vb*{\alpha})$ is still attained, although the minimiser is no longer unique.
In this case, one may select the minimum-norm solution using the Moore--Penrose pseudoinverse.
By introducing the inner product
\begin{equation}
    \label{eq:inner_product_rho0}
    \expval{A,B}_{\rho_0} \coloneq \frac{1}{2}\mathrm{Tr}(\rho_0 \{ A,B\}),
\end{equation}
with respect to which 
\begin{equation}
    \label{eq:gram-norm-version}
    G_{ij}=\expval{B_i,B_j}_{\rho_0}    
\end{equation}
is now a Gram matrix of the basis elements $B_i$, $G$ is found to be singular iff the $B_i$ are linearly dependent with respect to this $\rho_0$-weighted inner product \cite{horn2012matrix}.
A similar inner product also appears naturally in the frequentist analysis of the Helstrom (SLD) and Holevo Cram\'{e}r--Rao bounds \cite{demkowicz2020multiparameter}.

The optimal coefficients $\vb*{\alpha}^\text{opt}$ from \cref{eq:S_constrained_opt_solutions} then define a subspace-optimal operator
\begin{equation}
    \label{eq:SV_constrained}
    \mathcal{S}_\mathcal{V} \coloneq \sum_i \alpha_i^\text{opt} B_i,
\end{equation}
analogous to the SPM operator but not necessarily satisfying \cref{eq:lyapunov}.
The corresponding MSL is
\begin{equation}
    \label{eq:MSL_constrained}
    \begin{aligned}
        \mathcal{L}(\mathcal{S}_\mathcal{V}) &= \lambda - \vb{b}^\top \vb*{\alpha}^\text{opt} \\
        &= \lambda - \vb{b}^\top G^{-1} \vb{b} \geq \mathcal{L}(\mathcal{S}),
    \end{aligned}
\end{equation}
where the inequality trivially stems from the role of \(\mathcal{L}(\mathcal{S})\) as the global minimum. 
The MSL in \cref{eq:MSL_constrained} depends only on the subspace $\mathcal{V}$ and is invariant under linear changes of basis $B_i \mapsto U_{ij} B_j$.\footnote{Indeed, under such a linear transformation, $\vb{b} \mapsto U \vb{b}$ and $G \mapsto U G U^\top$, so $\vb{b}^\top G^{-1} \vb{b}$ is invariant.}
By convexity, any other set of coefficients $\vb*{\alpha}$ will necessarily give a larger MSL, so that $\mathcal{L}(\mathcal{S}_\mathcal{V})$ provides a lower bound on the MSL achievable by any Hermitian operator in $\mathcal{V}$.

Further applying the Gram--Schmidt algorithm to $\{B_i\}$ produces an orthogonal basis $\{\tilde{B}_i\}$ that leads to a diagonal Gram matrix $\tilde{G}$. 
The MSL then separates as
\begin{equation}
    \label{eq:MSL_constrained_diagonal}
    \mathcal{L}(\mathcal{S}_\mathcal{V}) = \lambda - \sum_i \frac{\tilde{b}_i^2}{\tilde{G}_{ii}},
\end{equation}
where $\tilde{b}_i=\textrm{Tr}(\bar{\rho} \tilde{B}_i)$ and $\tilde{G}_{ii}=\textrm{Tr}(\rho_0 \tilde{B}_i^2) $.
Written in this way, each term in \cref{eq:MSL_constrained_diagonal} measures the component of $\bar{\rho}$ along a single orthogonal direction $\tilde{B}_i$---a linear combination of the original $\{B_i\}$---and one observes that a direction is irrelevant for the MSL if $\tilde{b}_i=0$.

Note that this approach is structurally similar to the Ritz--Galerkin method in numerical analysis \cite{brenner2008mathematical,ciarlet2002finite}, where the solution of a differential operator equation is approximated by its orthogonal projection onto a finite-dimensional subspace.
Here, the Lyapunov equation in \cref{eq:lyapunov} plays the role of the operator equation, $\expval{\cdot,\cdot}_{\rho_0}$ defines the energy inner product, and the subspace $\mathcal{V}$ is chosen on physical rather than numerical grounds.

To summarise, we have shown that, once an operator subspace $\mathcal{V}=\operatorname{span}( B_1, \ldots, B_d )$ is specified, \cref{eq:SV_constrained,eq:S_constrained_opt_solutions} determine a constrained analogue of the SPM operator $\mathcal{S}_\mathcal{V}$ containing the estimator and measurement with minimum MSL within this subspace in \cref{eq:MSL_constrained}.
This constitutes our first main result.
The construction only requires that the moments $G_{ij} =  \frac{1}{2}\mathrm{Tr}\,\big(\rho_0\{ B_i,B_j \}\big)$ and $b_i = \mathrm{Tr}(\bar{\rho} B_i)$ in \cref{eq:G_and_b_def} are finite; it is otherwise independent of the dimension of the underlying Hilbert space, applying not only to Gaussian states, but to CV systems with any number of modes as well as to finite-dimensional systems.
The examples in \cref{sec:admissible_measurements} and \cref{sec:1d_displacement,sec:1d_squeezing} specialise to single-mode Gaussian probes for clarity.

Before we proceed with assessing how well this constrained optimum works in practice, we next illustrate the types of admissible estimators and measurements that may arise within our framework.

\subsection{Admissible estimators and measurements}
\label{sec:admissible_measurements}
If $\mathcal{S}_\mathcal{V}$ is self-adjoint, the spectral theorem \cite{reed1972methods,hall2013quantum} guarantees the existence of a unique PVM $\mathcal{P}_\mathcal{V}$ such that
\begin{equation}
    \label{eq:S_V-spectral-decomposition}
    \mathcal{S}_\mathcal{V} = \int \dd{\tau} \mathcal{P}_\mathcal{V}(\tau)\,\tau.
\end{equation}
This spectral decomposition defines a PVM whose outcomes are the spectral values $\tau \in \mathbbm{R}$.
Note that $\tau$, unlike the eigenvalues $s$ in \cref{eq:eigenvalues_SPM_posterior_mean}, need not equal the PM for the measurement $\mathcal{P}_\mathcal{V}$; we return to this point and its consequences in  \cref{sec:improved_MSL_posterior_mean}.
% \footnote{\textcolor{blue}{A related caveat applies to the global SPM operator $\mathcal{S}$.
% Non-negativity of the MSL in \cref{eq:MSL_min_global} gives $\norm{\mathcal{S}}^2_{\rho_0}=\textrm{Tr}(\rho_0 \mathcal{S}^2) \leq \lambda$, so, provided $\lambda \leq \infty$, $\mathcal{S}$ is well defined as an element of the space of operators equipped with the inner product in \cref{eq:inner_product_rho0}.
% This is all that is required for \cref{eq:average_info_error_constrained_difference_equality,eq:MSL_chain_of_inequalities}, which we use only to quantify the excess MSL.
% Whether $\mathcal{S}$ is self-adjoint on a (dense) domain of $\mathcal{H}$, and hence whether the globally optimal PVM in \cref{eq:eigendecomposition_SPM} exists, is a separate question that we do not address here
% }}

To illustrate the structure of \cref{eq:S_V-spectral-decomposition} and the estimation strategy it may define, in this section we consider representative choices of basis for a single-mode system; the generalisation to $N>1$ modes is straightforward.

First, consider the basis elements $B_i \in \{ \mathbbm{1}, \hat{q},\hat{p} \}$. 
Then,
\begin{equation}
    \mathcal{S}_\mathcal{V}= \alpha_0 \mathbbm{1} + \alpha_q \hat{q}+\alpha_p \hat{p} = \alpha_0 \mathbbm{1} + \abs{\vb*{\alpha}} \hat{q}_\phi
\end{equation}
is linear in the quadratures, where we have defined $\hat{q}_\phi = \hat{q} \cos \phi + \hat{p} \sin \phi$, $\abs{\vb*{\alpha}}^2 = \alpha_q^2 + \alpha_p^2$ and $\phi = \arctan{\alpha_p/\alpha_q}$.
The operator $\hat{q}_\phi$ is self-adjoint\footnote{The quadratures $\hat{q}$ and $\hat{p}$ are unbounded and can only be defined on proper dense subdomains of $L^2(\mathbbm{R})$; for unbounded operators, being self-adjoint rather than Hermitian is required to guarantee a spectral decomposition. 
The operator $\hat{q}$ is self-adjoint on its maximal domain $\{\psi \in L^2(\mathbbm{R}) : q\psi(q) \in L^2(\mathbbm{R})\}$ \cite[Prop. 9.30]{hall2013quantum}, and $\hat{p}$ on an analogous Fourier-space domain \cite[Prop. 9.32]{hall2013quantum}. The rotated quadrature $\hat{q}_\phi$ is also self-adjoint since it is unitarily equivalent to $\hat{q}$.} and unitarily equivalent to the position operator.
Its spectral PVM corresponds to projectors onto the (generalised) eigenstates of the rotated quadrature $\ket{q_\phi}$, which realise homodyne detection~\cite{scully1997optics,weedbrook2012gaussian,monras2013phase,adesso2014continuous}.
The cases $\phi=0$ and $\phi=\pi/2$ correspond to homodyne detection of position and momentum, respectively.
By virtue of $\mathcal{M}_1$ defined in \cref{eq:Mk_operator_moments}, the estimator maps each homodyne outcome $q_\phi$ to $\tilde{\theta}(q_\phi) = f^{-1}(\alpha_0 + \abs{\vb*{\alpha}} q_\phi)$.

At second order in the quadratures, the most general form that $\mathcal{S}_\mathcal{V}$ can take is
\begin{equation}
    \label{eq:M_quadratic}
    \mathcal{S}_\mathcal{V} = \alpha_0 \mathbbm{1} + \vb*{\alpha}^\top \hat{\vb{R}} + \frac{1}{2} \hat{\vb{R}}^\top C \hat{\vb{R}},
\end{equation}
where $\vb*{\alpha}=\left( \alpha_q, \alpha_p \right)^\top$ and $C=C^\top$ is a $2\times2$ symmetric matrix.
For this configuration, $\mathcal{S}_\mathcal{V}$ is guaranteed to be self-adjoint.\footnote{For real Weyl-quantised polynomials of degree at most two in the quadratures, $\mathcal{S}_\mathcal{V}$ is self-adjoint on a dense domain of $\mathcal{H}$ \cite[Thm.~2.18 and Sec.~4.3]{folland2016harmonic} \cite[Thms.~352,~357]{de2011symplectic}.}
However the spectral structure, and hence the resulting estimator-measurement pair, depends on the signature of $C$.
If $C \succ 0$, the PVM projects onto (in general displaced and squeezed) Fock states.
Such Fock state measurements are non-Gaussian, but they can be implemented experimentally using photon counting and suitable Gaussian pre-processing  \cite{serafini2017quantum,paschotta2025fock,von2019quantum,barbieri2022optical}.
The corresponding estimator maps each Fock state label $n$ to $\tilde{\theta}(n) = f^{-1}(\tau_n)$, where $\tau_n$ is the corresponding eigenvalue of $\mathcal{S}_\mathcal{V}$.

If $C$ has a mixed signature (e.g., for $\hat{q}\hat{p}+\hat{p}\hat{q}$ or $\hat{p}^2-\hat{q}^2$), the spectrum is continuous and unbounded above and below, and the (generalised) eigenstates are non-normalisable scattering states, making the associated PVM difficult to implement \cite{milburn1994hyperbolic,chiribella2006optimal,oh2019optimal}.
Nevertheless, the PVM can be considered as the limit of a sequence of well-defined PVMs on finite-dimensional Fock-space truncations \cite{pegg1989phase}.  

Higher-order polynomials in a \emph{single} quadrature $\hat{q}_\phi$---leading to homodyne measurements with estimators polynomial in the outcome---are always self-adjoint for any degree, since they act as multiplication operators in the eigenbasis of $\hat{q}_\phi$ \cite[Prop. 9.30]{hall2013quantum}.
This guarantees $\mathcal{S}_\mathcal{V}$ to be self-adjoint for the single-quadrature bases used in the examples.
However, polynomials of degree greater than two that \emph{mix} the two quadratures are not in general self-adjoint; there are well-known pathological cases such as $\hat{p}^2 - \hat{q}^4$ \cite[Thm. 9.41]{hall2013quantum}.
For higher-degree mixed bases, self-adjointness must therefore be verified independently.
Nevertheless, even when the spectral decomposition is not available, the MSL $\mathcal{L}(\mathcal{S}_\mathcal{V})$ in \cref{eq:MSL_constrained} remains a valid lower bound on the MSL achievable by any Hermitian operator in $\mathcal{V}$ whenever $G$ and $\vb{b}$ are finite, although this bound need not itself be attainable.

\subsection{Assessing the performance of the constrained optimum}

While $\mathcal{S}_{\mathcal{V}}$ enables the systematic construction of closed-form expressions for optimal estimation strategies in CV settings, such as for Gaussian states, one generally has $\mathcal{S}_{\mathcal{V}} \neq \mathcal{S}$, so that the inequality in \cref{eq:MSL_constrained} is not saturated.
Given this, in the following we first determine under which conditions $\mathcal{S}_{\mathcal{V}} = \mathcal{S}$, then quantify the excess error incurred by restricting the measurement operator when $\mathcal{S}_{\mathcal{V}} \neq \mathcal{S}$, and finally show that one can further reduce the constrained MSL by refining the estimator to the posterior mean.

\subsubsection{Characterisation of exact polynomial solutions}

A necessary condition for the constrained approach to be exact is that $\mathcal{S}$ is a finite-degree polynomial in the quadratures $\hat{q}$ and $\hat{p}$.
This condition is then sufficient whenever $\mathcal{S}$ belongs to the chosen polynomial subspace $\mathcal{V}$.
The following theorem characterises precisely when this condition holds, and is most naturally stated in phase space.

Let $W_X(\vb{r})$ denote the Weyl symbol of a general Hermitian operator $X$, defined in \cref{eq:Wigner_func_characteristic_func_FT}.
For the prior-averaged operators $\rho_0$ and $\bar{\rho}$, due to linearity of the trace and integrals, these are given by
\begin{subequations}
    \label{eq:Weyl_symbols_rhobar_and_rho0}
    \begin{align}
        W_{\rho_0}(\vb{r}) &= \int \dd{\theta} p(\theta) W_{\rho(\theta)}(\vb{r}), \\
        W_{\bar{\rho}}(\vb{r}) &= \int \dd{\theta} p(\theta) f(\theta) W_{\rho(\theta)}(\vb{r}). 
    \end{align}
\end{subequations}
\begin{theorem}
    \label{theorem:1}
    Let $\rho_0$ be a faithful Gaussian state.
    Then $\mathcal{S}$ is a polynomial of degree $R$ in the quadrature operators if and only if $W_{\bar{\rho}}/W_{\rho_0}$ is a polynomial of degree $R$ in the phase-space coordinates $\vb{r}$.
\end{theorem}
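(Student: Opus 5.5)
Our plan is to translate the Lyapunov equation \cref{eq:lyapunov} into phase space using the Moyal (star) product, and to exploit the fact that $W_{\rho_0}$ is a Gaussian. Writing $W_{\mathcal{S}}$ for the Weyl symbol of $\mathcal{S}$, the symbol of an anticommutator is
\begin{equation}
W_{\{\mathcal{S},\rho_0\}} = 2\, W_{\mathcal{S}} \cos\!\Big(\tfrac{1}{2}\overleftarrow{\nabla}^\top \Omega \overrightarrow{\nabla}\Big) W_{\rho_0}.
\end{equation}
Only the even-order terms of the Moyal bidifferential expansion survive. \Cref{eq:lyapunov} is therefore equivalent to
\begin{equation}
W_{\mathcal{S}} \cos\!\Big(\tfrac{1}{2}\overleftarrow{\nabla}^\top \Omega \overrightarrow{\nabla}\Big) W_{\rho_0} = W_{\bar{\rho}}.
\end{equation}
Weyl quantisation is a bijection between real polynomials of degree $R$ in $\vb{r}$ and Weyl-ordered Hermitian polynomials of degree $R$ in $\hat{\vb{R}}$. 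So the statement ``$\mathcal{S}$ is a polynomial of degree $R$'' is equivalent to ``$W_{\mathcal{S}}$ is a polynomial of degree $R$''. Faithfulness of $\rho_0$ ensures that the solution $\mathcal{S}$ of the Lyapunov equation is unique, so it suffices to work at the level of symbols.

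For the forward direction, let $W_{\mathcal{S}}=P$ be a polynomial of degree $R$. Then the cosine series terminates after finitely many terms. Each derivative of the Gaussian $W_{\rho_0}$ has the form $Q(\vb{r})\,W_{\rho_0}$, with $Q$ a polynomial; explicitly, $\partial^\beta W_{\rho_0}$ is a Hermite-type polynomial of degree $|\beta|$ times $W_{\rho_0}$. Hence $W_{\bar\rho}/W_{\rho_0} = \mathcal{T}P$, where
\begin{equation}
\mathcal{T}P = \sum_{k\ge 0} \frac{(-1)^k}{(2k)!\,4^k}\,\big(P \,(\overleftarrow{\nabla}^\top \Omega \overrightarrow{\nabla})^{2k} W_{\rho_0}\big)/W_{\rho_0}.
\end{equation}
Because $\mathcal{T}P$ is a finite sum, it is a polynomial. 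For the degree, consider the $k$-th term: it takes $2k$ derivatives of $P$, lowering its degree by $2k$, and multiplies by a polynomial of degree $\le 2k$ coming from $W_{\rho_0}$. Each term thus has degree $\le R$, and the $k=0$ term is exactly $P$.

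The main obstacle is showing that the degree is exactly $R$, i.e.\ that $\mathcal{T}$ does not cancel the top-degree part. We plan to handle this by tracking leading terms. The highest-degree part of $\partial^\beta W_{\rho_0}/W_{\rho_0}$ is $(-V^{-1}(\vb{r}-\bar{\vb r}))^\beta$. The top-degree component of $\mathcal{T}P$ is then $\sum_k c_k\, (\vb{u}^\top \nabla)^{2k} P_R$ with $\vb{u}=\Omega V^{-1}\vb{r}$ (up to sign), where $P_R$ is the degree-$R$ homogeneous part of $P$. In other words, it is $\cos(\tfrac12 \vb{u}(\vb r)\cdot\nabla)P_R$ evaluated homogeneously. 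Using the quadratic form $\vb{r}^\top V^{-1}\Omega^\top\ldots$, we would show that this map on homogeneous degree-$R$ polynomials is unipotent-like in a suitable grading, or at least injective. A cleaner alternative is the following. Apply a symplectic (Gaussian unitary) transformation and a displacement to bring $V$ to Williamson normal form $\bigoplus_k \nu_k \mathbbm{1}_2$ with $\nu_k>1/2$ (faithful) and $\bar{\vb r}=0$. Weyl symbols transform covariantly and polynomial degree is preserved. Then $\mathcal{T}$ acts diagonally on the harmonic-oscillator structure: in complex coordinates it maps $z^m\bar z^n$ to itself plus lower-degree terms, with a nonzero leading coefficient of the form $\tfrac12[(\nu+\tfrac12)^{\ldots}\pm(\nu-\tfrac12)^{\ldots}]$-type factors. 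These factors are nonzero because $\nu_k>\tfrac12$.

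This shows that $\mathcal{T}$ is an invertible, degree-preserving, triangular linear map on the finite-dimensional space of polynomials of degree $\le R$. Invertibility immediately gives the converse. If $W_{\bar\rho}/W_{\rho_0}=Q$ has degree $R$, take $P=\mathcal{T}^{-1}Q$, a polynomial of degree exactly $R$. Its Weyl quantisation satisfies \cref{eq:lyapunov} by the computation above, and by uniqueness of the Lyapunov solution for faithful $\rho_0$ it equals $\mathcal{S}$.
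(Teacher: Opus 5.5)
Your proposal is correct. Its skeleton is the paper's: the Moyal form of the Lyapunov equation, survival of only even orders, the degree count showing that $\mathcal{T}$ maps polynomials of degree at most $R$ into themselves, and operator-level uniqueness for faithful $\rho_0$ to identify the polynomial solution with $\mathcal{S}$. You genuinely diverge in how invertibility of $\mathcal{T}$ and the exact-degree claim are obtained.

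The paper computes no leading coefficients. If $\mathcal{T}P=0$ with $P$ of degree at most $R$, then the Weyl quantisation $\hat P$ satisfies $\hat P\rho_0+\rho_0\hat P=0$. The same eigenbasis argument that gives uniqueness, $(\lambda_i+\lambda_j)\langle\psi_i|\hat P|\psi_j\rangle=0$ with all $\lambda_i>0$, then forces $P=0$. Injectivity plus equal finite dimensions gives bijectivity. Exact degree follows for free by applying the ``degree at most $R$'' equivalence at both $R$ and $R-1$. So the step you single out as the main obstacle is already settled by the uniqueness fact you invoke anyway.

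Your route instead computes the associated graded map. The top-degree part of $\mathcal{T}$ is $P_R\mapsto\tfrac12[P_R\circ A_+ + P_R\circ A_-]$ with $A_\pm=\mathbbm{1}\pm\tfrac{i}{2}\Omega V^{-1}$. Since $i\Omega V^{-1}$ has eigenvalues $\pm1/\nu_k$, this map multiplies the Williamson-coordinate monomial $\prod_k z_k^{m_k}\bar z_k^{n_k}$ by $\tfrac12\big[\prod_k(1+\tfrac{1}{2\nu_k})^{m_k}(1-\tfrac{1}{2\nu_k})^{n_k}+\prod_k(1-\tfrac{1}{2\nu_k})^{m_k}(1+\tfrac{1}{2\nu_k})^{n_k}\big]$.

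The sign here must be a plus, not your hedged $\pm$: a difference would vanish whenever $m_k=n_k$. Positivity for $\nu_k>\tfrac12$ is exactly where faithfulness enters. For a single mode at $\nu=\tfrac12$ the coefficient of $z\bar z$ vanishes, consistent with $\hat a^\dagger\hat a$ anticommuting to zero with the vacuum.

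What your route buys is an explicit triangular structure with known diagonal. That gives a back-substitution algorithm for $W_{\mathcal{S}}$ from $\Phi$ and a quantitative picture of how the problem degenerates as $\nu_k\to\tfrac12$. The cost is extra symplectic machinery (Williamson's theorem, or at least the spectrum of $i\Omega V^{-1}$) plus a leading-term computation your proposal only sketches; the ``unipotent-like'' alternative is unnecessary. The paper's argument is shorter, coordinate-free, and uses faithfulness through a single mechanism.
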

The proof is deferred to \cref{appendix:polynomial_S_proof}.
Here, a faithful $\rho_0$ has $\ker{\rho_0} = \{0\}$.

This theorem recovers a well-known result in local QET.
There, the SLD is defined by a Lyapunov-type equation \cite{braunstein1994statistical,holevo2011probabilistic,demkowicz2020multiparameter},
\begin{equation}
     L\rho + \rho L = 2\partial_\theta \rho,
\end{equation}
which is analogous to \cref{eq:lyapunov} under the identifications $\rho_0 \leftrightarrow \rho$ and $\bar{\rho} \leftrightarrow \partial_\theta\rho$. 
For the Gaussian Wigner function defined in \cref{eq:Gaussian_Wigner_func}, the ratio $W_{\partial_\theta\rho}/W_{\rho(\theta)}$ is always a linear and/or quadratic polynomial in $\vb{r}$,\footnote{This follows because $W_{\partial_\theta\rho}=\partial_\theta W_{\rho(\theta)}$ from linearity of the trace and integrals in \cref{eq:characteristic_function,eq:Wigner_func_characteristic_func_FT}, and the fact that $\partial_\theta \mathrm{log}[W_{\rho(\theta)}]$ is at most quadratic in $\vb{r}$.} and so, for faithful $\rho$, \cref{theorem:1} implies that the SLD is a linear and/or quadratic operator in the quadratures.

Prior treatments \cite{monras2013phase,serafini2017quantum,chang2025multiparameter} verify this by proposing a linear and/or quadratic ansatz and checking it satisfies the Lyapunov equation, and uniqueness of the solution then determines the SLD on the support of $\rho$ \cite{genoni2019non}. 
\cref{theorem:1} provides an alternative, ansatz-free characterisation: the polynomial degree of $\mathcal{S}$  is read off directly from the ratio $W_{\bar{\rho}}/W_{\rho_0}$.
This is particularly useful in the CV Bayesian setting, where $\mathcal{S}$ is generically an infinite series in the quadratures and no polynomial ansatz is apparent \textit{a priori}.
However, when the prior-averaged state $\rho_0$ is non-Gaussian, the hypothesis of \cref{theorem:1} fails, so $\mathcal{S}$ may not admit a finite polynomial representation and $\mathcal{S}_\mathcal{V}$ should be understood as an approximation to $\mathcal{S}$.

\subsubsection{Geometric interpretation of constrained optimality}
\label{sec:geometric_interpretation}

For cases where \(\mathcal{S}_{\mathcal{V}} \neq \mathcal{S}\), the difference \(\mathcal{L}(\mathcal{S}_{\mathcal{V}}) - \mathcal{L}(\mathcal{S})\) admits a simple expression.  
As shown in \cref{appendix:proof_average_info_error_quadratic_inequality}, the MSL associated with an arbitrary Hermitian operator $X$ can always be written as
\begin{equation}
    \label{eq:average_info_error_constrained_difference_equality}
     \mathcal{L}(X) = \mathcal{L}(\mathcal{S}) + \norm{X-\mathcal{S}}^2_{\rho_0},
\end{equation}
where \(\mathcal{L}(X)\) is defined as in \cref{eq:average_info_error_projective_POVM} and the norm with respect to the $\rho_0$-weighted inner product in \cref{eq:inner_product_rho0} as \(\norm{A}_{\rho_0}^2 \coloneq \expval{A, A}_{\rho_0}\).
Taking $X=\mathcal{S}_{\mathcal{V}}$, one has
\begin{equation}
     \label{eq:absolute-MSL-difference}
     \mathcal{L}(\mathcal{S}_{\mathcal{V}}) - \mathcal{L}(\mathcal{S})
     =
     \norm{\mathcal{S}_{\mathcal{V}}-\mathcal{S}}^2_{\rho_0},
\end{equation}
so that \(\mathcal{S}_{\mathcal{V}}\) and \(\mathcal{S}\) achieve the same MSL iff
$\norm{\mathcal{S}_\mathcal{V}-\mathcal{S}}_{\rho_0}=0$, i.e., when the difference
$\mathcal{S}_\mathcal{V}-\mathcal{S}$ vanishes on the support of $\rho_0$.
If $\rho_0$ is faithful, then this condition is equivalent to
$\mathcal{S}\in\mathcal{V}$, i.e., $\mathcal{S}_\mathcal{V}=\mathcal{S}$ as operators.

More importantly, \cref{eq:average_info_error_constrained_difference_equality} leads to a natural geometric interpretation for
\(\mathcal{S}_{\mathcal{V}}\).
Namely, substituting
\cref{eq:average_info_error_constrained_difference_equality}
into the constrained optimisation problem
\begin{equation}
    \mathcal{S}_\mathcal{V}
    =
    \underset{\mathcal{M}_1 \in \mathcal{V}}{\text{argmin}}
    \; \mathcal{L}(\mathcal{M}_1),
\end{equation}
yields
\begin{equation}
    \label{eq:constrained_minimum_as_projection}
    \mathcal{S}_\mathcal{V}
    =
    \underset{\mathcal{M}_1 \in \mathcal{V}}{\text{argmin}}
    \: \norm{\mathcal{M}_1 - \mathcal{S}}^2_{\rho_0}, 
\end{equation}
showing that $\mathcal{S}_\mathcal{V}$ is the orthogonal projection of the global solution $\mathcal{S}$ onto the chosen operator subspace $\mathcal{V}$ with respect to the $\rho_0$-weighted inner product.
For this reason, we henceforth refer to $\mathcal{S}_\mathcal{V}$ as the \emph{projected SPM operator}.
\cref{eq:absolute-MSL-difference,eq:constrained_minimum_as_projection} constitute our second main result.

\subsubsection{Improved estimator based on posterior mean}
\label{sec:improved_MSL_posterior_mean}

So far we have performed a constrained optimisation of the MSL by restricting the operator $\mathcal{M}_1$ to an operator subspace $\mathcal{V}$. 
This optimisation jointly determines both a measurement, via the spectral PVM of $\mathcal{S}_\mathcal{V}$, and an associated estimator obtained from (post-processing) the corresponding spectral values.
By construction, this pair minimises $\mathcal{L}$ within the restricted operator space.

Such an estimator is generally different from the classical Bayes estimator minimising the MSL, given in \cref{eq:posterior_mean_estimator}.
The reason is that the SPM operator admits the spectral decomposition
$\mathcal{S}=\int \dd{s} \mathcal{P}(s)\, s$,
where the eigenvalues $\{s\}$ coincide with the PM values in \cref{eq:eigenvalues_SPM_posterior_mean}.
This follows from the fact that $\mathcal{S}$ satisfies the Lyapunov equation in \cref{eq:lyapunov}.
In contrast, since the projected SPM operator $\mathcal{S}_\mathcal{V}$ does \emph{not} satisfy such a Lyapunov equation, its eigenvalues are not, in general, equal to the PM associated with its spectral measurement.

This observation motivates a natural refinement.
Having determined the spectral PVM $\mathcal{P}_\mathcal{V}(\tau)$ of the projected SPM operator $\mathcal{S}_\mathcal{V}$ from \cref{eq:S_V-spectral-decomposition}, one may replace the constrained estimator with the classical Bayes estimator coming from the likelihood induced by $\mathcal{P}_\mathcal{V}(\tau)$, i.e., \(\tilde{\theta}(\tau)\) written as in \cref{eq:posterior_mean_estimator}, but with the posterior
\begin{equation}
    p(\theta|\tau) \propto p(\theta) \mathrm{Tr}(\mathcal{P}_\mathcal{V}(\tau) \rho(\theta)).
\end{equation}
Since the estimator in \cref{eq:posterior_mean_estimator} minimises the MSL for a fixed measurement, this substitution can only reduce (or leave unchanged) the loss.

The resulting performance therefore satisfies the chain of inequalities
\begin{equation}
    \label{eq:MSL_chain_of_inequalities}
    \mathcal{L}(\mathcal{M}_1)
    \geq
    \mathcal{L}(\mathcal{S}_\mathcal{V})
    \geq
    \mathcal{L}(\tilde{\theta}_{\text{PM}},\mathcal{P}_\mathcal{V})
    \geq
    \mathcal{L}(\mathcal{S}).
\end{equation}
The first inequality holds for any $\mathcal{M}_1 \in \mathcal{V}$, as $\mathcal{S}_\mathcal{V}$ minimises the MSL within the subspace.
The second inequality follows from replacing the estimator obtained by the spectral values of $\mathcal{S}_\mathcal{V}$ with the classical Bayes estimator associated with the same measurement.
The third inequality expresses the global optimality of the SPM operator $\mathcal{S}$.

In practice, the following protocol is enabled:
(i) determine the projected SPM operator $\mathcal{S}_\mathcal{V}$ within a suitably chosen subspace,
(ii) implement its spectral PVM $\mathcal{P}_\mathcal{V}(\tau)$, and
(iii) apply the classical Bayes estimator associated with this measurement in post-processing.
The resulting MSL can be shown \cite{rubio2020bayesian,rubio2022quantum} to be
\begin{equation}
    \label{eq:MSL_general_PM_estimator}
    \mathcal{L}(\tilde{\theta}_{\text{PM}},\mathcal{P}_\mathcal{V})
    = \lambda - \int \dd{\tau}
    \frac{\mathrm{Tr}[\mathcal{P}_\mathcal{V}(\tau)\bar{\rho}]^2}{\mathrm{Tr}[\mathcal{P}_\mathcal{V}(\tau)\rho_0]},
\end{equation}
and the excess error with respect to the global optimum satisfies
\begin{equation}
    \mathcal{L}(\tilde{\theta}_{\text{PM}},\mathcal{P}_\mathcal{V})
    -
    \mathcal{L}(\mathcal{S})
    \leq
    \norm{\mathcal{S}_\mathcal{V} - \mathcal{S}}^2_{\rho_0}.
\end{equation}
This procedure provides the best-performing closed-form estimation strategy within the framework and constitutes our third main result.

\section{Applications}
\label{sec:examples}

We now illustrate the framework through a range of representative examples.
In order to carry out meaningful comparisons, we will use the \emph{relative} MSL
\begin{equation}
    \label{eq:relative_MSL_figure_of_merit}
    \mathcal{L}_R(\tilde{\theta},\mathcal{P}) \coloneq \frac{\mathcal{L}(\tilde{\theta},\mathcal{P}) - \mathcal{L}(\mathcal{S})}{\mathcal{L}(\mathcal{S})} = \frac{\norm{\mathcal{M}_1 - \mathcal{S}}^2_{\rho_0}}{\mathcal{L}(\mathcal{S})},
\end{equation}
for an arbitrary estimator $\tilde{\theta}(z)$ and PVM $\mathcal{P}(z)$ collected in the Hermitian operator $\mathcal{M}_1 = \int \dd{z} \mathcal{P}(z) f[\tilde{\theta}(z)]$.

\subsection{Displacement in 1D}
\label{sec:1d_displacement}

Consider a single mode prepared in a reference state $\rho_\mathrm{in}$ and subject to an unknown displacement $\theta$ along the $q$ quadrature. 
The resulting state is
\begin{equation}
\label{eq:displaced-encoding}
\rho(\theta)=D(\theta) \rho_\mathrm{in} D^{\dagger}(\theta),
\end{equation}
where $D(\theta)=e^{-i \theta \hat{p}}$.
If the input probe $\rho_\mathrm{in}$ is a Gaussian state with mean $\overline{\vb{r}}=(q_0,p_0)^\top$ and covariance matrix $V$, with $V_{qp}=0$ for simplicity, the corresponding Wigner function reads
\begin{equation}
\label{eq:Wigner_func_displacement}
W_\theta(q, p) \propto \exp \left[-\frac{(q - q_{0} - \theta)^{2}}{2 V_{qq}} - \frac{(p - p_{0})^{2}}{2 V_{pp}}\right].
\end{equation}

Since the displacement operator translates the mean along the $q$ direction, $\theta$ acts as a location parameter with $f(\theta)=\theta$, yielding the loss function $\ell(\tilde{\theta}, \theta) = (\tilde{\theta} - \theta)^2$.
One must specify, in addition, a prior. 
We first consider a Gaussian prior with known mean $\mu_0$ and variance $\sigma_0^2$, for which homodyne measurements are known to be optimal \cite{personick1971application,morelli2021bayesian}.

In such a configuration, $\rho_0$ corresponds to a faithful Gaussian state.\footnote{A single-mode Gaussian state $\rho_0$ is faithful iff its covariance matrix satisfies $\det(V_{\rho_0})>1/4$, or, equivalently, the symplectic eigenvalue is $\nu > 1/2$ \cite{adesso2014continuous,serafini2005quantifying}.
Here, $\det(V_{\rho_0}) = V_{pp}(V_{qq} + \sigma_0^2$).
Since the input state satisfies the uncertainty relation $V_{qq}V_{pp} \geq 1/4$, $\det(V_{\rho_0}) > 1/4$ for any $\sigma_0^2 >0$.}
Calculating the ratio $W_{\bar{\rho}}(\vb{r})/W_{\rho_0}(\vb{r})$ defined in \cref{eq:Weyl_symbols_rhobar_and_rho0} renders
\begin{equation}
    \frac{W_{\bar{\rho}}(\vb{r})}{W_{\rho_0}(\vb{r})}
    = \frac{\sigma_0^2 (q - q_0) + \mu_0 V_{qq}}{\sigma_0^2 + V_{qq}}.
\end{equation}
Consequently, \cref{theorem:1} guarantees that the full SPM operator $\mathcal{S}$ is exactly linear in the quadratures for any prior width.
We can thus choose the linear subspace $\mathcal{V}$ spanned by $B_i \in \{\mathbbm{1}, \hat{q}\}$, so that the projection of the SPM operator is trivial and we have $\mathcal{S}_\mathcal{V} = \mathcal{S}$ exactly.

To calculate $\mathcal{S}_\mathcal{V}$, we define, for convenience, the centred operator $\Delta \hat{q} \coloneq \hat{q} - \langle \hat{q} \rangle_{\rho_0} \mathbbm{1}$, where, using \cref{eq:prior_averaged_state_moments},
\begin{equation}
    \langle \hat{q} \rangle_{\rho_0}
    = \int \dd{\theta} p (\theta) \mathrm{Tr}[\rho(\theta) \hat{q}]
    = q_0 + \mu_0.
\end{equation}
In this centred basis, the Gram matrix $G$ in \cref{eq:gram-norm-version} is diagonal, so that the linear system in \cref{eq:S_constrained_opt_solutions} reads
\begin{equation}
    \label{eq:S_constrained_opt_solutions_linear_displacement}
    \mqty( 1 & 0 \\ 0 & V_{qq} + \sigma_0^2) \mqty(\alpha_0 \\ \alpha_q) = \mqty(\mu_0 \\ \sigma_0^2),
\end{equation}
where $G_{qq} = \mathrm{Tr}[\rho_0 (\Delta \hat{q})^2] = V_{qq}+\sigma_0^2$, $b_1 = \mathrm{Tr}(\bar{\rho} \, \mathbbm{1}) = \mu_0 $, and $b_{q} = \mathrm{Tr}(\bar{\rho} \: \Delta \hat{q}) = \sigma_0^2$ have been used.
Solving for the coefficients \(\vb*{\alpha}\) yields the projected SPM operator
\begin{equation}
    \label{eq:S_operator_displacement_analytic}
    \begin{aligned}
        \mathcal{S}_\mathcal{V} &= \alpha_0 \mathbbm{1} + \alpha_q  \Delta \hat{q} \\
        &= \mu_0 \mathbbm{1} + \frac{\sigma_0^2}{\sigma_0^2 + V_{qq}} \left[ \hat{q} - (q_0 + \mu_0)\mathbbm{1} \right] \\
        &= \frac{\sigma_0^2 (\hat{q} - q_0\mathbbm{1}) + \mu_0 V_{qq} \mathbbm{1}}{\sigma_0^2 + V_{qq}},
    \end{aligned}
\end{equation}
which, as discussed, corresponds to the exact SPM operator $\mathcal{S}$ in this example. 

We next calculate the optimal strategy.
Given the resolution of the identity $\int \dd{q}\, \ket{q}\bra{q} = \mathbbm{1}$, the spectral decomposition of $\mathcal{S}$ is given by
\begin{equation}
    \mathcal{S}= \int \dd{q}\ketbra{q} (\alpha_0+\alpha_q \Delta q),
\end{equation}
where $\Delta q \coloneq q- \langle \hat{q} \rangle_{\rho_0}$.
The measurements are given by the spectral PVM $\mathcal{P}(q)\dd{q}=\ketbra{q} \dd{q}$, which is homodyne detection of the $q$ quadrature.
The outcome $q$ is post-processed by the estimator $\tilde{\theta}(q)=f^{-1}(\alpha_0+\alpha_q \Delta q)$, which for an identity $f$ yields
\begin{equation}
    \label{eq:x_homodyne_linear_estimator}
    \tilde{\theta}(q)= \frac{\sigma_0^2 ( q - q_0) + \mu_0 V_{qq}}{\sigma_0^2 + V_{qq}}.
\end{equation}
This is a linear \emph{shrinkage} estimator: a convex combination of the measurement outcome and \emph{a priori} mean which interpolates between prior-dominated $\sigma_0^2 \to 0$ and measurement-dominated $\sigma_0^2 \to \infty$ regimes \cite{lehmann2006theory}.
Since $\mathcal{S}_\mathcal{V} = \mathcal{S}$, Eq.~\eqref{eq:x_homodyne_linear_estimator} also coincides with the PM estimator for homodyne detection.

Finally, the MSL of this strategy can be calculated using \cref{eq:MSL_constrained}, which turns out to be
\begin{equation}
    \label{eq:MSL_displacement}
    \mathcal{L}(\mathcal{S}) = \left( \frac{1}{V_{qq}} + \frac{1}{\sigma_0^2} \right)^{-1}.
\end{equation}
This recovers the known result \cite{morelli2021bayesian}.

While a Gaussian prior is a natural choice in the context of Gaussian states and often leads to tractable results \cite{morelli2021bayesian}, it is not the only meaningful representation of prior ignorance in this context \cite{kass1996the, boeyens2025role}.
Moreover, the framework developed in \cref{sec:constrained-optimisation} is not limited to Gaussian priors. 
For that reason, we now consider a uniform prior
\begin{equation}
    \label{eq:uniform-prior}
    p(\theta) = \frac{1}{\Delta}, \quad \theta \in \left[\mu_0 - \frac{\Delta}{2},\mu_0 + \frac{\Delta}{2}\right],
\end{equation}
with mean $\mu_0$ and variance $\sigma_0^2 = \Delta^2/12$.
For this prior, which represents \emph{maximum} ignorance for a location parameter \cite{jaynes2003probability}, $\rho_0$ is a mixture of displaced Gaussian states that is no longer Gaussian.
Consequently, \cref{theorem:1} no longer guarantees that the SPM operator is a finite polynomial in the quadratures, and its projection onto the subspace $\mathcal{V}$ should therefore be viewed as an approximation to the full $\mathcal{S}$.
In the following, we consider several choices of operator subspaces and compare their performance in \cref{fig:displacement_estimation} using the relative MSL in \cref{eq:relative_MSL_figure_of_merit}.

\begin{figure}
    \centering
    \includegraphics[width= 0.99\linewidth]{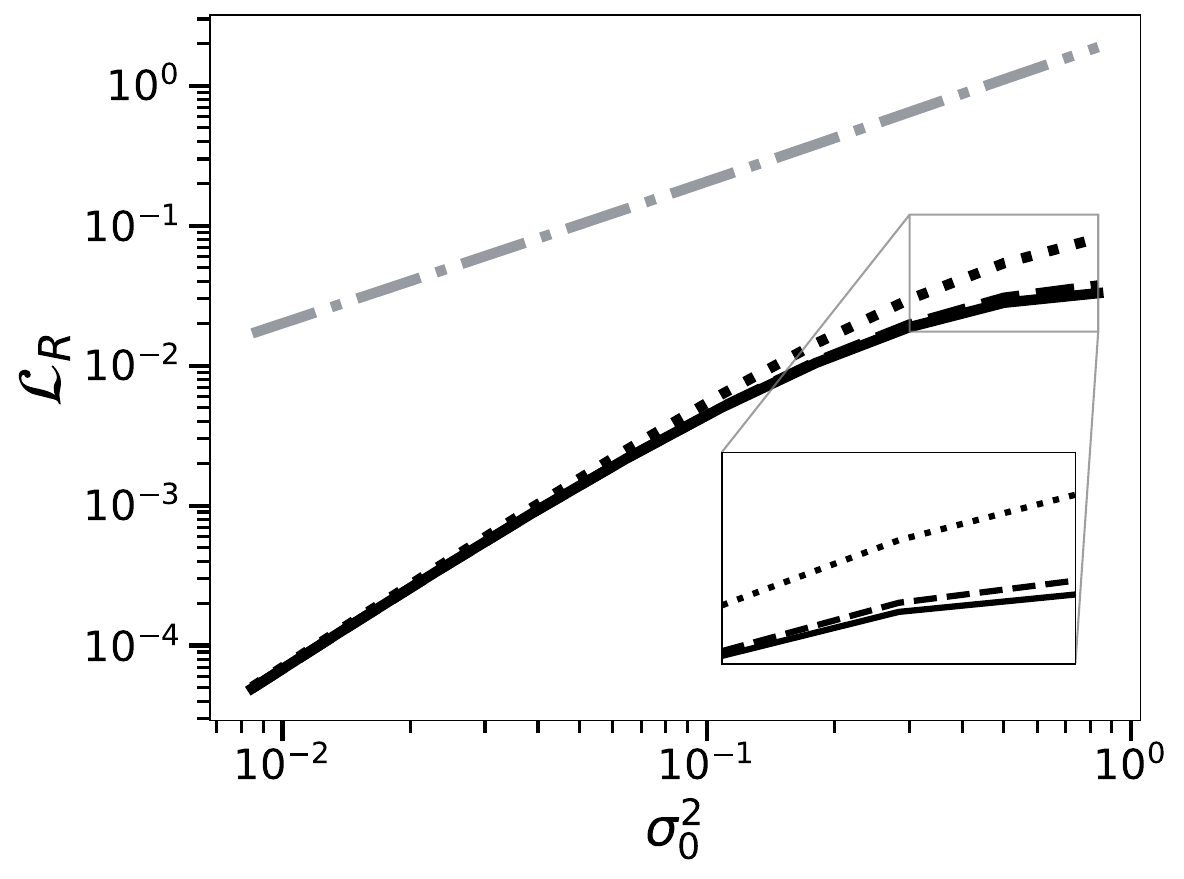}
    \caption{Relative MSL in \cref{eq:relative_MSL_figure_of_merit} as a function of the prior variance $\sigma_0^2$ for estimating a displacement parameter $\theta$ as per \cref{eq:displaced-encoding}.
    Here, we use a uniform prior with width $\Delta$ and variance $\sigma_0^2 = \Delta^2/12$, and $\rho_\textrm{in}$ is a coherent state with $\alpha=0.5(1+i)$.
    The dotted and dashed curves denote the relative MSL of the linear $\{ \mathbbm{1}, \Delta \hat{q}\}$ and cubic $\{ \mathbbm{1}, \Delta \hat{q},\Delta \hat{q}^3\}$ subspaces, respectively, representing $q$-homodyne measurements with linear and cubic estimators.
    The solid curve shows the performance of $q$-homodyne measurements combined with the PM estimator, which includes contributions from all orders.
    The grey dot-dashed curve indicates the prior MSL ratio for reference.
    Numerical values of $\mathcal{L}(\tilde{\theta}_{PM},\mathcal{P}_\mathcal{V})$ and $\mathcal{L}(\mathcal{S})$ are computed using a Fock-space truncation of 40.
    }
    \label{fig:displacement_estimation}
\end{figure}

The simplest choice is the linear basis $B_i \in \{ \mathbbm{1}, \Delta \hat{q} \}$.
One finds that the projected SPM operator takes the same form as \cref{eq:S_operator_displacement_analytic}, with $\sigma_0^2 = \Delta^2/12$.
However, since $\mathcal{S}_\mathcal{V} \neq \mathcal{S}$, the associated linear estimator $\tilde{\theta}(q) = \alpha_0 + \alpha_q \Delta q$ no longer coincides with the PM estimator.
As shown in \cref{fig:displacement_estimation}, where this \(q\)-homodyne measurement with the uniform prior in \cref{eq:uniform-prior} is implemented, the resulting linear estimator (dotted line) is nevertheless close to optimal, as evidenced by the relative MSL \(\mathcal{L}_R\), which is already below \(10^{-1}\) for the largest prior width \(\sigma_0\) considered and decreases rapidly thereafter.

It is natural to ask whether including higher-order terms $\{\mathbbm{1}, \hat{q},\hat{q}^2, \ldots \}$, amounting to \(q\)-homodyne measurements with polynomial estimators of higher degree, leads to an improvement in the MSL. 
To answer this question, we first orthogonalise the basis using the Gram--Schmidt procedure (cf. \cref{sec:constrained-optimisation}) such that the Gram matrix $G$ is diagonal.
Then the MSL separates as in \cref{eq:MSL_constrained_diagonal} and each direction contributes independently.
Starting from $\{\mathbbm{1}, \hat{q},\hat{q}^2 \}$, we obtain the orthogonal set
$\{\mathbbm{1}, \Delta\hat{q},\Delta\hat{q}^2 - \langle \Delta\hat{q}^2 \rangle_{\rho_0}\mathbbm{1}\}$, where we used the centred operator $\Delta \hat{q} = \hat{q} -\langle \hat{q} \rangle_{\rho_0}\mathbbm{1}$.
In this basis, we find that
\begin{equation}
    \tilde{b}_{q^2}=\mathrm{Tr}[\bar{\rho} (\Delta\hat{q}^2 - \langle \Delta\hat{q}^2 \rangle_{\rho_0}\mathbbm{1})]=0,
\end{equation}
so that this orthogonalised basis element does not contribute to the MSL.
Since $\Delta \hat{q}^2$ differs from it only by a multiple of the identity, which the subspace already contains, enlarging $\mathcal{V}$ from $\operatorname{span}(\mathbbm{1},\Delta \hat{q})$ to $\operatorname{span}(\mathbbm{1},\Delta \hat{q},\Delta \hat{q}^2)$ leaves the MSL unchanged.
In fact, this holds at all even orders: for a prior symmetric about $\mu_0$, one finds that $\mathrm{Tr}[\bar\rho\,(\Delta\hat{q}^{2n}-\langle \Delta\hat{q}^{2n}\rangle_{\rho_0} \mathbbm{1})]=0$, so that the centred element $\Delta\hat{q}^{2n}$ does not contribute to the MSL.\footnote{This follows by substituting $\theta\to2\mu_0-\theta$ and
$q\to2(q_0+\mu_0)-q$, which leaves $p(\theta)$, $p(q|\theta)$ and
$\Delta q^{2n}$ invariant.
The same symmetry gives $\langle \Delta \hat{q}^{2n},\Delta \hat{q}^{2m+1} \rangle_{\rho_0} = \langle\Delta \hat{q}^{2(n+m)+1} \rangle_{\rho_0}=0$, so the Gram system \cref{eq:S_constrained_opt_solutions} decouples into even and odd blocks. 
In the even block, only the identity survives with $\alpha_0 = \textrm{Tr}(\bar{\rho} \mathbbm{1}) = \mu_0$.}
Equivalently, the PM is an odd function of $\Delta q$, and so
even-order polynomial corrections do not contribute.

This observation motivates the basis
% \footnote{\textcolor{blue}{Note that the centering matters once the quadratic direction is dropped: $\Delta\hat{q}^3$ has no component along the non-contributing direction $\Delta\hat{q}^2 - \langle \Delta\hat{q}^2 \rangle_{\rho_0}\mathbbm{1}$, whereas $\hat{q}^3$ has one proportional to $\langle \hat{q} \rangle_{\rho_0}$. Within $\operatorname{span}(\{\mathbbm{1}, \hat{q},\hat{q}^3 \})$, this component cannot be removed independently of the cubic coefficient.}}
$\{\mathbbm{1}, \Delta\hat{q},\Delta\hat{q}^3\}$,
corresponding to \(q\)-homodyne detection with a cubic estimator
$\tilde{\theta}(q) = \alpha_0 + \alpha_q \Delta q + \alpha_{q^3} \Delta q^3$. 
While analytic expressions for the coefficients can be obtained, we omit them here for brevity.
The resulting MSL, shown in \cref{fig:displacement_estimation} (dashed curve), lies between that of the previous linear estimator (dotted curve) and that of the exact PM estimator (\cref{eq:posterior_mean_estimator}, solid curve), and is substantially closer to the latter.

It is not surprising that the PM estimator achieves a lower MSL for the same PVM, since it is not restricted to a finite polynomial truncation---generally,
$\mathcal{L}(\tilde{\theta}_\mathcal{V}, \mathcal{P}_\mathcal{V}) \geq \mathcal{L}(\tilde{\theta}_{\text{PM}}, \mathcal{P}_\mathcal{V})$.
Note that even the combination of $q$-homodyne detection with the PM estimator is not globally optimal here, as evidenced by $\mathcal{L}_R \neq 0$ for the solid curve.
In practice, evaluating the PM typically requires numerical integration to compute its normalisation and moments.
In contrast, the approach in this work provides a systematic and computationally efficient approximation scheme for deriving analytic expressions for the optimal estimator and measurement at each truncation order.

\subsection{Estimation of squeezing}
\label{sec:1d_squeezing}

For our second example, consider the estimation of an unknown squeezing parameter $\theta \in \mathbbm{R}$ acting on a single-mode state $\rho_\mathrm{in}$.
The encoded state is
\begin{equation}
    \label{eq:squeezing-transformation}
    \rho(\theta)=S(\theta) \rho_\mathrm{in} S^{\dagger}(\theta),
\end{equation}
where the single-mode squeezing operator 
\begin{equation}
    \label{eq:squeezing_operator_zero_angle}
    S(\theta)=\exp(i\frac{\theta}{2}\{ \hat{q},\hat{p}\})
\end{equation}
is chosen at angle $\varphi=0$ without loss of generality.\footnote{The single-mode squeezing operator at an arbitrary angle $\varphi$ has the form 
\[
S(\theta,\varphi) = \exp(\frac{\eta^* \hat{a}^2 - \eta \hat{a}{^\dagger}^2}{2}),
\]
where $\eta = \theta e^{i \varphi} \in \mathbbm{C}$~\cite{walls2008quantum,barnett2002methods}.
Defining $R(\varphi)=e^{-i \varphi \hat{a}^\dagger \hat{a}}$, one has $R(\varphi) \hat{a}^2 R^\dagger(\varphi)=e^{2i \varphi}\hat{a}^2$, so that \(S(\theta,\varphi) = R^\dagger(\varphi/2) S(\theta,0) R(\varphi/2)\).
Therefore, estimation at an angle $\varphi$ with probe $\rho_\mathrm{in}$ is unitarily equivalent to estimation at $\varphi=0$ with a rotated probe $\rho_\mathrm{in}^\prime = R(\varphi/2) \rho_\mathrm{in} R^\dagger(\varphi/2)$, where $\varphi$ is assumed to be known.
}
Under this transformation, the mean and covariance matrix of a Gaussian state transform as $\overline{\vb{r}} \mapsto \Sigma \, \overline{\vb{r}}$ and $V \mapsto \Sigma V \Sigma^\top$, where $\Sigma(\theta) = \operatorname{diag}(e^{-\theta},e^{\theta})$ \cite{adesso2014continuous,serafini2017quantum,weedbrook2012gaussian}.
For displaced probes, $\theta$ is encoded in both first and second moments.
Here, we focus instead on the undisplaced case in order to showcase the framework more clearly. 
In this case, the squeezing parameter $\theta$ is encoded entirely in the covariance matrix as
\begin{equation}
    \label{eq:squeezing_covariances}
    \langle\hat{q}^2\rangle_{\rho(\theta)}=V_{qq} e^{-2\theta},
    \quad 
    \langle\hat{p}^2\rangle_{\rho(\theta)} = V_{pp} e^{2\theta},
\end{equation}
while all first moments vanish identically and $ \langle{\frac{1}{2}\{\hat{q},\hat{p}\}}\rangle_{\rho(\theta)}=V_{qp}$ is $\theta$-independent.

One immediate consequence is that 
\begin{equation}
    \label{eq:b_vanish_squeezing_linear_q}
    \mathrm{Tr}[\bar{\rho} \, \hat{q}_\phi^{2n+1}]=\int \dd{\theta}  f(\theta) \, p(\theta) \mathrm{Tr}[\rho(\theta) \hat{q}_\phi^{2n+1} ] = 0
\end{equation}
for any quadrature angle $\phi$ and any $n \geq 0$, since all odd moments of a zero-mean Gaussian $\rho(\theta)$ vanish.
Homodyne detection with any $f$-transformed estimator odd in the outcome $q_\phi$ therefore does not contribute to the MSL in \cref{eq:MSL_constrained}.
The same conclusion follows from the likelihood: $p(q_\phi|\theta)$ is a zero-mean Gaussian in $q_\phi$ and depends on the outcome only through $q_\phi^2$, hence the PM estimator \cref{eq:posterior_mean_estimator} is an even function of $q_\phi$ for any prior.

To calculate the projected SPM operators $\mathcal{S}_\mathcal{V}$ and the associated MSLs for the state in \cref{eq:squeezing-transformation}, we need to specify a prior and a loss function. 
We choose the same Gaussian prior as in \cref{sec:1d_displacement}, for which now the prior-averaged state $\rho_0$ is non-Gaussian.
Consequently, the SPM operator is generally not a finite polynomial in the quadratures so $\mathcal{S}_\mathcal{V} \neq \mathcal{S}$.

As for the loss function, we first note from \cref{eq:squeezing_operator_zero_angle} that $S(\theta_1)S(\theta_2)= S(\theta_1 + \theta_2)$, so that the squeezing parameter $\theta$ acts additively under composition \cite{agarwal2012quantum}.
It is therefore a location parameter with $f(\theta) = \theta$.\footnote{
Specifically, the set $\{ S(\theta) | \: \theta \in \mathbbm{R}\}$ forms a one-parameter group under operator multiplication and is isomorphic to the group of real numbers under addition $(\mathbbm{R},+)$.
The unique (up to normalisation) Haar measure on $(\mathbbm{R},+)$ is the Lebesgue measure $\dd{\theta}$, which leads to $f(\theta) = \theta$ \cite{boeyens2025role,kass1996the}.
Alternatively, one could treat $z = e^{\theta} \in \mathbbm{R}^+$ as a scale parameter, since $S(z_1)S(z_2)= S(z_1 z_2)$, with a scale-invariant loss function $\ell(\tilde{z},z) = \log^2(\tilde{z}/z)$ and $f(z) = \log z$ \cite{rubio2022quantum}.
The only change in the estimation problem is in $\bar{\rho} = \int \dd{z} p_z(z) \rho(z) \log(z)$.
The substitution $z = e^{\theta}$ then recovers the same $\bar{\rho}$ as in the displacement case with $f(\theta)=\theta$ and the transformed prior $p_z(z) =p_\theta(\log{z})/z$.
In particular, a Gaussian $p_\theta$ corresponds to a lognormal $p_z$.
} 
Accordingly, the standard quadratic loss $\ell(\tilde{\theta}, \theta) = (\tilde{\theta} - \theta)^2$ reemerges as in \cref{sec:1d_displacement}, displaying translational invariance \cite{rubio2024first}.

Given this prescription, we first consider the subspace $\mathcal{V}_\text{quad} =
\operatorname{span}(\mathbbm{1},\hat{q},\hat{p},\hat{q}^2,\hat{p}^2,\tfrac{1}{2}\{\hat{q},\hat{p}\})$
of all Hermitian polynomials of degree at most two in the quadratures, introduced in \cref{sec:admissible_measurements}.
It is convenient to organise $\mathcal{V}_\text{quad}$ using the rescaled operators 
\begin{subequations}
    \begin{align}
        \label{eq:squeezing_rescaled_operators_Q} Q_\pm &= \frac{\hat{q}^2 - \tilde{V}_{qq}\mathbbm{1}}{\tilde{V}_{qq}} \; \pm \; \frac{\hat{p}^2- \tilde{V}_{pp}\mathbbm{1}}{\tilde{V}_{pp}}, \\
        \label{eq:squeezing_rescaled_operator_J} \Delta K &= \tfrac12\{\hat q,\hat p\} - \tilde{V}_{qp}\mathbbm{1},
    \end{align}
\end{subequations}
so that $\mathcal{V}_\text{quad}=\operatorname{span}(\mathbbm{1},\hat{q},\hat{p},Q_-, Q_+, \Delta K)$.
Here, we have defined the prior-averaged components of the covariance matrix \cref{eq:covariance_matrix},
\begin{equation}
    \begin{aligned}
        \tilde{V}_{qq} & \coloneq \langle\hat{q}^2\rangle_{\rho_0} = e^{-2\mu_0+2\sigma_0^2} V_{qq}, \\
        \tilde{V}_{pp} &\coloneq \langle\hat{p}^2\rangle_{\rho_0}=e^{2\mu_0 + 2\sigma_0^2} V_{pp}, \\
        \tilde{V}_{qp} &\coloneq \langle \tfrac{1}{2}\{\hat{q},\hat{p}\}\rangle_{\rho_0} = V_{qp},
    \end{aligned}
\end{equation}
where the averages have been evaluated for a Gaussian prior.

In this basis, $Q_-$ is $\rho_0$-orthogonal, with respect to the inner product in \cref{eq:inner_product_rho0}, to every other element.
The identity is likewise orthogonal to the remaining four, since each has zero $\rho_0$-mean by construction.
Moreover, the $\boldsymbol{b}$ vector vanishes along all four of $\hat{q}, \hat{p}, Q_+$, and $\Delta K$.
\footnote{\label{footnote:quadratic_squeezing}Writing \cref{eq:squeezing_rescaled_operators_Q} as $Q_\pm = Q \pm P$, one finds that $\langle Q_- ,Q_+\rangle_{\rho_0}=\|Q\|^2_{\rho_0}-\|P\|^2_{\rho_0}=0$ and $\langle Q_-, \Delta K \rangle_{\rho_0}=0$, since $\langle Q, \Delta K\rangle_{\rho_0}=\langle P, \Delta K\rangle_{\rho_0}=2\tilde{V}_{qp}$.
The linear directions $\hat{q}$ and $\hat{p}$ have $b=0$ and are $\rho_0$-orthogonal to the quadratic elements since all odd moments of a zero-mean Gaussian $\rho(\theta)$ vanish.
As for the other components of the $\vb{b}$ vector, $b_Q=-b_P=-2 \sigma_0^2$, so that $b_{Q_-}=b_Q-b_P=-4\sigma_0^2$ and $b_{Q_+}=b_Q+b_P=0$.
Lastly, the centred $b_{\Delta K}$ vanishes because $\langle \tfrac12\{\hat q,\hat p\} \rangle_{\rho(\theta)}=V_{qp}$ is $\theta$-independent.
Both $b_Q=-b_P$ and $\|Q\|^2_{\rho_0}=\|P\|^2_{\rho_0}$ hold for any prior symmetric around $\mu_0$; under the reflection $\theta \mapsto 2\mu_0 - \theta$, the moments of $\hat{q}$ and $\hat{p}$ end up exchanging roles in the prior-weighted integrals.
}
The linear system in \cref{eq:S_constrained_opt_solutions} therefore decouples into a diagonal block on $\{\mathbbm{1},Q_-\}$ and a second block on $\{\hat{q},\hat{p}, Q_+, \Delta K\}$, which, having a vanishing $\boldsymbol{b}$ vector, contributes nothing to the MSL in \cref{eq:MSL_constrained} and need not be inverted.

The coefficients of the $\{\mathbbm{1},Q_-\}$ basis are then obtained from the linear system
\begin{equation}
    \mqty(1 & 0 \\ 0 & \|Q_-\|^2_{\rho_0} ) \mqty(\alpha_0 \\ \alpha_{Q_-}) = \mqty(\mu_0 \\ b_{Q_-}),
\end{equation}
% In the $\{\mathbbm{1},Q_-, Q_+, \Delta K\}$ basis, \cref{eq:S_constrained_opt_solutions} reduces to
% \begin{widetext}
% \begin{equation}
%     \mqty(1 & 0 & 0 & 0 \\ 0 & \|Q_-\|^2_{\rho_0} & 0 & 0 \\ 0 & 0 & \|Q_+\|^2_{\rho_0} & \langle  Q_+, \Delta K \rangle_{\rho_0} \\ 0 & 0 & \langle Q_+, \Delta K\rangle_{\rho_0} & \|\Delta K\|^2_{\rho_0}) \mqty(\alpha_0 \\ \alpha_{Q_-} \\ \alpha_{Q_+} \\ \alpha_{\Delta K}) = \mqty(\mu_0 \\ b_{Q_-} \\ 0 \\ 0).
% \end{equation}
% \end{widetext}
and solving yields the projected SPM operator and corresponding MSL,
\begin{subequations}
    \begin{align}
        \label{eq:constrained_SPM_squeezing_quadratic}
        \mathcal{S}_{\mathcal{V}_\text{quad}} &= \mu_0 \mathbbm{1} + \frac{b_{Q_-}}{\|Q_-\|^2_{\rho_0}}Q_-, \\
        \label{eq:constrained_MSL_squeezing_quadratic}
        \mathcal{L}(\mathcal{S}_{\mathcal{V}_\text{quad}}) &= \lambda - \mu_0^2 - \frac{b_{Q_-}^2}{\|Q_-\|^2_{\rho_0}}.
    \end{align}
\end{subequations}
Here, $\lambda - \mu_0^2 = \sigma_0^2$, $b_{Q_-} = -4\sigma_0^2$, and
\begin{equation}
    \|Q_-\|^2_{\rho_0} = \frac{1 + 2(3e^{8\sigma_0^2}-1)V_{qq}V_{pp} - 4V_{qp}^2}{e^{4\sigma_0^2}V_{qq}V_{pp}}.
\end{equation}
Note that $\operatorname{span}(\mathbbm{1}, \hat{q}^2,\hat{p}^2)$ already contains the operator $Q_-$ and hence $\mathcal{S}_{\mathcal{V}_\text{quad}}$.
Since $\operatorname{span}(\mathbbm{1},Q_-) \subseteq \operatorname{span}(\mathbbm{1},\hat{q}^2,\hat{p}^2) \subseteq \mathcal{V}_\text{quad}$, all three subspaces attain the same minimum MSL, and the optimisation may equivalently be carried out in the smaller one.

While the above specialises to a Gaussian prior, $\Delta K$ remains uninformative for any prior since $\langle \tfrac{1}{2}\{\hat{q},\hat{p}\}\rangle_{\rho(\theta)}$ is $\theta$-independent.
By contrast, $b_{Q_+}$ vanishes for symmetric priors because $\langle Q_+ \rangle_{\rho(\theta)}$ is even about $\mu_0$.
For asymmetric priors, $b_{Q_+}$ and $\langle  Q_-, Q_+ \rangle_{\rho_0}$ are non-zero in general, so the full $4 \times 4$ linear system must be solved.
Its entries can still be computed explicitly since they only involve prior averages of $e^{\pm 2\theta}$, $e^{\pm 4 \theta}$ and $\theta e^{\pm 2\theta}$, which are obtainable from the moment generating function $M(t) = \langle e^{t \theta} \rangle$ and its first derivative.

% Solving \cref{eq:S_constrained_opt_solutions} yields the projected SPM operator
% \begin{equation}
%     \mathcal{S}_{\mathcal{V}_2} = \mu_0 \mathbbm{1} 
%     +\,\frac{4\sigma_0^2 e^{2\sigma_0^2}(V_{qq} e^{-2\mu_0} \Delta (\hat{p}^2) -V_{pp} e^{2\mu_0} \Delta (\hat{q}^2))}{1+2(3e^{8 \sigma_0^2}-1) V_{qq} V_{pp}-4V_{qp}^2},
% \end{equation}
% with corresponding MSL
% \begin{equation}
%     \mathcal{L}(\mathcal{S}_{\mathcal{V}_2})=\sigma_0^2 -  \frac{16 \sigma_0^4 e^{4\sigma_0^2} V_{qq} V_{pp}}{1 + 2 (3 e^{8 \sigma_0^2} -1) V_{qq}V_{pp}-4V_{qp}^2}.
% \end{equation}
% Since $\mathcal{S}_{\mathcal{V}_\text{quad}} \in \operatorname{span}(\mathbbm{1},Q_-) \subset \mathcal{V}_2 \subset \mathcal{V}_\text{quad}$, with $\mathcal{V}_2 = \operatorname{span}(\mathbbm{1}, \hat{q}^2,\hat{p}^2)$, the two minimum MSL coincide $\mathcal{L}(\mathcal{S}_{\mathcal{V}_2}) =\mathcal{L}(\mathcal{S}_{\mathcal{V}_\text{quad}})$, and it suffices to work with $\mathcal{V}_2$ in what follows.

We evaluate the framework for two initial probes \(\rho_{\mathrm{in}}\): the vacuum and a thermal state with mean photon number $\bar{n}=0.1$, with covariances $V_{qq}=V_{pp}=\bar{n}+\frac{1}{2}$ and $V_{qp}=0$.
For each probe, we compare the relative MSL in \cref{eq:relative_MSL_figure_of_merit} for the constrained estimator induced by the eigenvalues of $\mathcal{S}_\mathcal{V}$ [\cref{eq:constrained_SPM_squeezing_quadratic}] and the PM estimator [\cref{eq:posterior_mean_estimator}], both obtained from the same measurement. 
The results are shown in \cref{fig:squeezing_estimation_zero_mean} (top: vacuum; bottom: thermal) as dashed [\cref{eq:constrained_MSL_squeezing_quadratic}] and solid [\cref{eq:MSL_general_PM_estimator}] green curves, respectively.
Both strategies converge rapidly to the global minimum $\mathcal{L}(\mathcal{S})$, with the thermal probe converging more slowly than the vacuum.
Furthermore, the PM estimator improves upon the constrained estimator across the full range of prior widths, consistent with \cref{eq:MSL_chain_of_inequalities}, with the improvement becoming more pronounced for broader priors.
The inset shows the corresponding absolute MSLs, where the convergence towards the global optimum $\mathcal{L}(\mathcal{S})$ (solid red) can be seen explicitly.

\begin{figure}
    \centering
    \includegraphics[width= 0.99\linewidth]{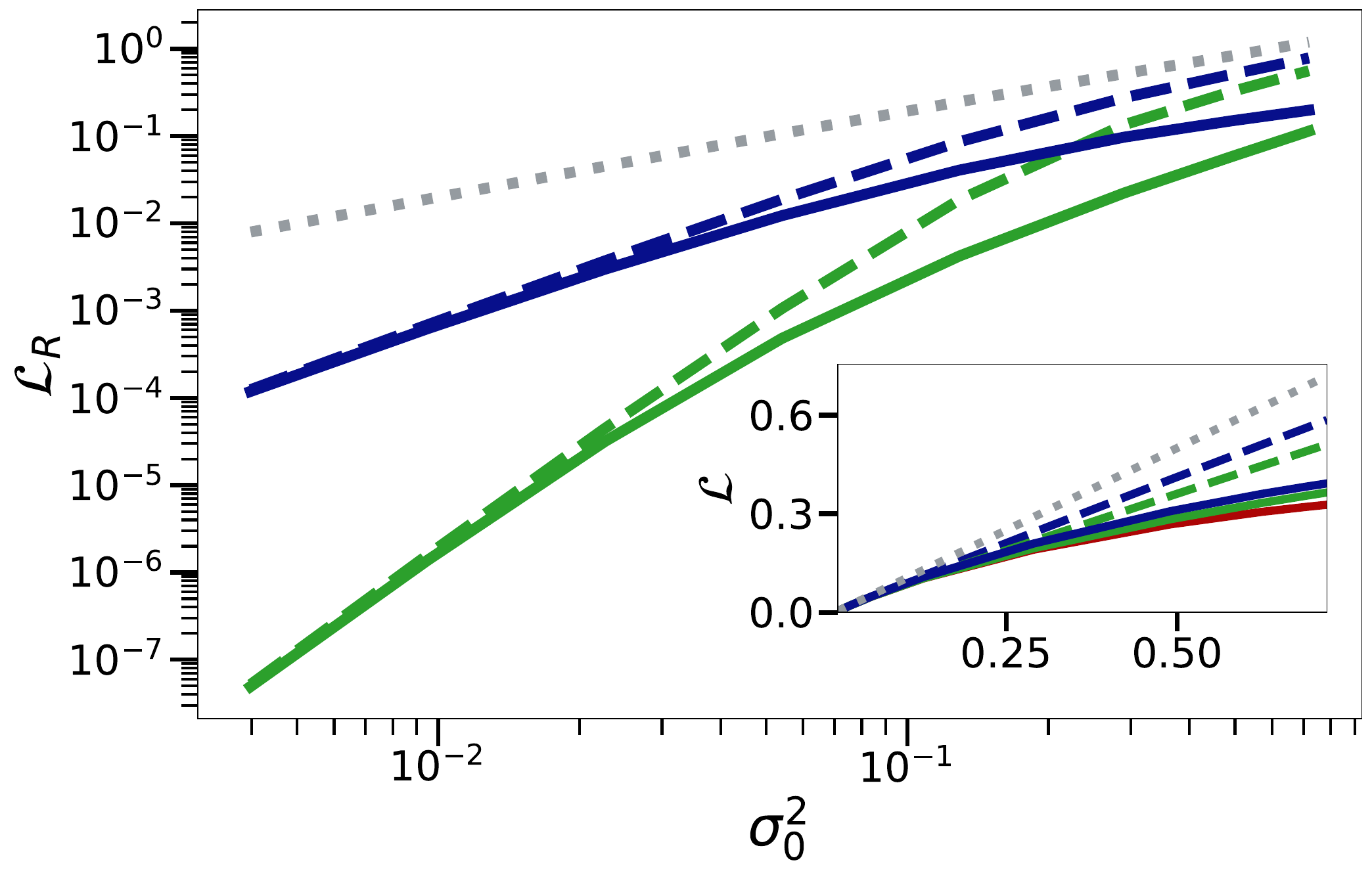}
    \includegraphics[width= 0.99\linewidth]{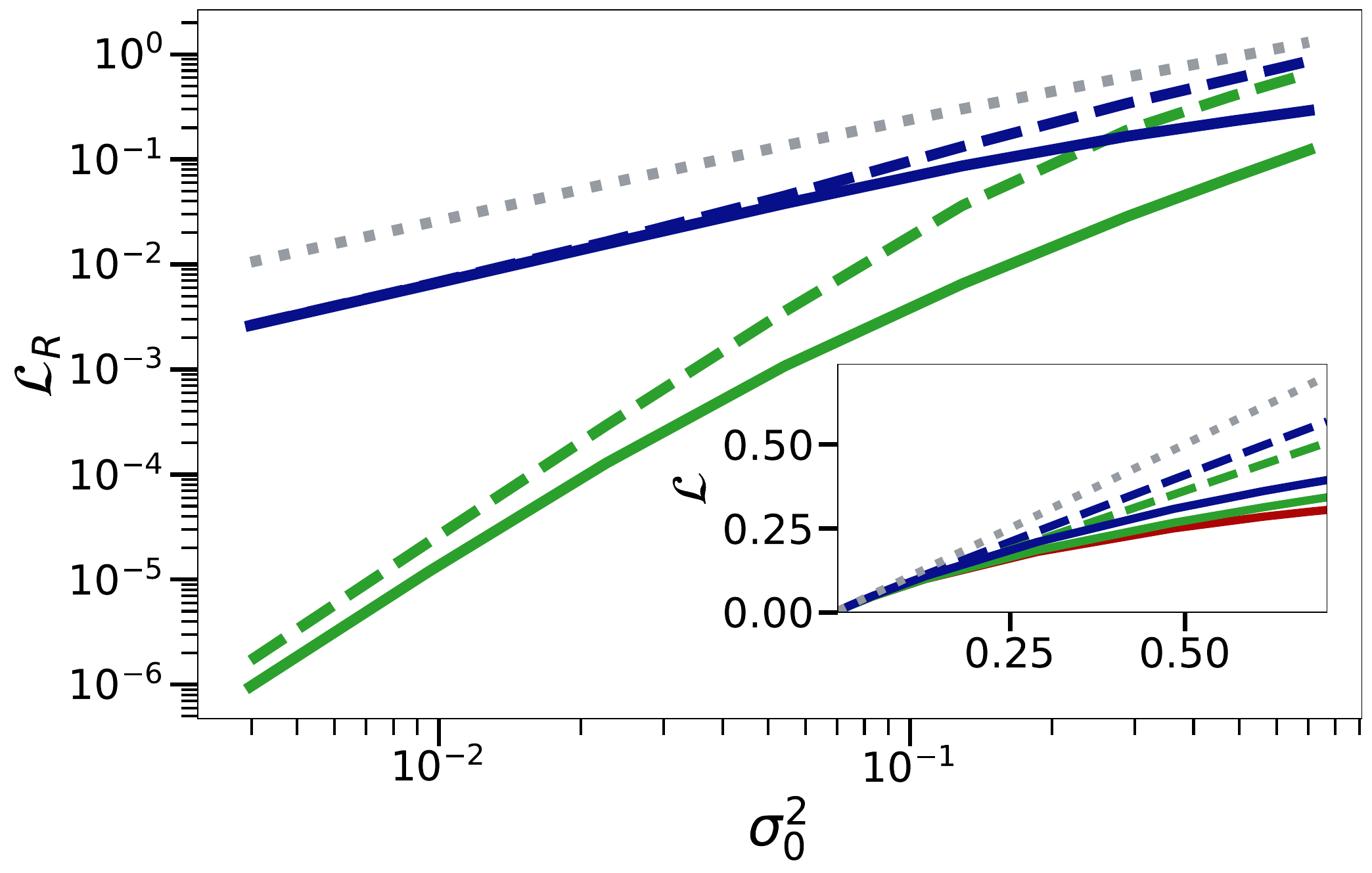}
    \caption{Relative MSL in \cref{eq:relative_MSL_figure_of_merit} as a function of the prior variance $\sigma^2_0$ for a Gaussian prior.
    Here, the task is estimating the squeezing parameter $\theta$ encoded as per \cref{eq:squeezing-transformation}, for a vacuum (top) and thermal (bottom) state $\rho_\textrm{in}$ with $\bar{n}=0.1$.
    In both plots, the dashed and solid green curves denote the relative MSL  for the $\mathcal{V}_\text{quad}$ subspace with the constrained [\cref{eq:constrained_MSL_squeezing_quadratic}] and PM estimator [\cref{eq:posterior_mean_estimator}] respectively.
    Similarly, the dashed and solid blue curves show the relative MSL of the $\mathcal{V}_\text{hom}$ subspace at homodyne angles $\phi = 0$ or $\pi/2$---chosen to minimise the MSL \cref{eq:MSL_zero_mean_quadratic_homodyne}---with the constrained [\cref{eq:MSL_zero_mean_quadratic_homodyne_minimised}] and PM estimator [\cref{eq:posterior_mean_estimator}] respectively.
    The grey dot-dashed curve indicates the \emph{a priori} MSL ratio for reference.
    Numerical values of $\mathcal{L}(\tilde{\theta}_{PM},\mathcal{P}_\mathcal{V})$ and $\mathcal{L}(\mathcal{S})$ are computed using a Fock-space truncation of 120.
    Inset: Absolute MSL $\mathcal{L}$ as a function of the \emph{a priori} width $\sigma_0^2$. 
    The red curve corresponds to the global minimum given by \cref{eq:MSL_min_global}. 
    We observe that the quadratic basis with the PM estimator (solid green) attains the lowest MSL across the full range, at the cost of a non-Gaussian PVM; the homodyne strategy with the PM estimator (solid blue) is the best directly implementable choice and overtakes the constrained estimator on the quadratic basis (dashed green) at larger $\sigma_0^2$.}
    \label{fig:squeezing_estimation_zero_mean}
\end{figure}

Despite its high rate of convergence, the full quadratic subspace $\mathcal{V}_\text{quad}$ has a practical drawback: the coefficients of $\hat{q}^2$ and $\hat{p}^2$ in $\mathcal{S}_{\mathcal{V}_\text{quad}}$ have opposite signs, which produces a spectrum that is continuous and unbounded from both above and below.
As such, the associated PVM requires non-Gaussian transformations that are experimentally challenging to implement \cite{milburn1994hyperbolic,chiribella2006optimal,oh2019optimal}, although, as noted in \cref{sec:admissible_measurements}, its statistics can be approximated by finite-dimensional Fock space truncations.

This motivates restricting to $\mathcal{V}_\text{hom} = \operatorname{span}(  \mathbbm{1}, \hat{q}_\phi^2)$, which forces both quadrature coefficients to share the same sign.
By \cref{eq:b_vanish_squeezing_linear_q}, this is the lowest-order homodyne subspace that contributes to the MSL.
Replacing $\hat{q}_\phi^2$ by the centred element $\Delta \hat{q}_\phi^2\coloneq \hat{q}_\phi^2-\langle\hat{q}_\phi^2\rangle_{\rho_0}\mathbbm{1}$ leaves the subspace $\mathcal{V}_\text{hom}$ unchanged but makes the Gram system in \cref{eq:S_constrained_opt_solutions} diagonal.
Repeating the same steps as above gives
\begin{subequations}
    \begin{align}
        \label{eq:constrained_SPM_squeezing_quadratic_homodyne}
        \mathcal{S}_{\mathcal{V}_\text{hom}}(\phi) &=\mu_0 \mathbbm{1} + \frac{b_{\Delta q_\phi^2}}{\|\Delta\hat q_\phi^2\|^2_{\rho_0}} \Delta \hat{q}_\phi^2, \\
        \label{eq:MSL_zero_mean_quadratic_homodyne}
        \mathcal{L}[\mathcal{S}_{\mathcal{V}_\text{hom}}(\phi)] &= \lambda - \mu_0^2- \frac{b_{\Delta q_\phi^2}^2}{\|\Delta\hat q_\phi^2\|^2_{\rho_0}}.
    \end{align}
\end{subequations}
This strategy corresponds to homodyne detection at angle $\phi$ with outcome $q_\phi$ post-processed by the quadratic estimator
\begin{equation}
    \label{eq:constrained_estimator_squeezing_quadratic}
    \tilde{\theta}(q_\phi) = \mu_0 + \frac{b_{\Delta q_\phi^2}}{\|\Delta\hat q_\phi^2\|^2_{\rho_0}}(q_\phi^2-\langle\hat{q}_\phi^2\rangle_{\rho_0}).
\end{equation}
Since $\Delta \hat{q}_\phi^2 = \Delta \hat{q}^2 \cos^2{\phi} + \Delta \hat{p}^2 \sin^2{\phi} +\Delta K\sin{2 \phi}$ and the $\vb{b}$ vector is linear in each basis element, $b_{\Delta q_\phi^2}$ follows from the components already obtained for $\mathcal{V}_\text{quad}$,
\footnote{
Specifically, $b_{\Delta q^2}=\tilde{V}_{qq} b_Q$, $b_{\Delta p^2}=\tilde{V}_{pp} b_P$, and $b_{\Delta K}=0$, where $b_Q=-b_P$ from Footnote~\ref{footnote:quadratic_squeezing}. For a Gaussian prior, $b_Q=-2 \sigma_0^2$.
}
giving
\begin{equation}
    b_{\Delta q_\phi^2} = -2\sigma_0^2
      (\tilde{V}_{qq}\cos^2\phi-\tilde{V}_{pp}\sin^2\phi).
\end{equation}
The norm can be calculated to be
\begin{multline}
\begin{aligned}
    \|\Delta\hat q_\phi^2\|^2_{\rho_0} =& \, \kappa_+ ( \tilde{V}_{qq}^2 \cos^4{\phi} + \tilde{V}_{pp}^2 \sin^4{\phi}) 
    \\
    &+ \frac{1}{2}(4\tilde{V}_{qp}^2 + \kappa_- \tilde{V}_{qq}\tilde{V}_{pp} )\sin^2{2\phi} 
    \\
    &+ 8 \tilde{V}_{qp}(\tilde{V}_{qq} \cos^3{\phi} \sin{\phi} + \tilde{V}_{pp} \sin^3{\phi} \cos{\phi}),
\end{aligned}
\end{multline}
with $\kappa_\pm  \coloneq 3 e^{\pm 4 \sigma_0^2} - 1$.\footnote{
Writing \[V_\phi(\theta) \coloneq \langle \hat{q}_\phi^2 \rangle_{\rho(\theta)} = V_{qq}e^{-2 \theta}\cos^2{\phi} + V_{pp}e^{2 \theta}\sin^2{\phi} + V_{qp}\sin{2\phi},\] and using $\langle \hat{q}_\phi^4 \rangle_{\rho(\theta)}=3\langle \hat{q}_\phi^2 \rangle^2_{\rho(\theta)}$ for a zero-mean Gaussian $\rho(\theta)$, one has \[\|\Delta\hat q_\phi^2\|^2_{\rho_0} = 3\int \dd{\theta} p(\theta) V_\phi(\theta)^2 - \left[\int \dd{\theta} p(\theta) V_\phi(\theta)\right]^2.\]
}
Since $\mathcal{V}_\text{hom}=\operatorname{span}(\mathbbm{1}, \hat{q}_\phi^2) \subseteq \mathcal{V}_\text{quad}$ for any $\phi$, \cref{eq:MSL_zero_mean_quadratic_homodyne} is bounded from below by that of the quadratic basis in \cref{eq:constrained_MSL_squeezing_quadratic}.

The minimum of the MSL in \cref{eq:MSL_zero_mean_quadratic_homodyne} for $V_{qp}=0$, which is satisfied by both the vacuum and thermal probes considered here, is achieved at the homodyne angles $\phi=0$ or $\phi=\pi/2$, and the MSL becomes\footnote{
Interestingly, $\mathcal{L}(\mathcal{S}_{\mathcal{V}_\text{hom}})$ is independent of the probe state at these angles.
At $\phi=0$, all terms containing $\sin{\phi}$ vanish, leaving $b_{\Delta q_0^2}=-2\sigma_0^2 \tilde{V}_{qq}$ and $\|\Delta\hat q_0^2\|^2_{\rho_0}=\kappa_+ \tilde{V}_{qq}^2$. 
The factor $\tilde{V}_{qq}^2$ then cancels, which gives $\mathcal{L}[\mathcal{S}_{\mathcal{V}_{\text{hom}}}(\phi=0)]=\sigma_0^2 - 4 \sigma_0^4/\kappa_+$.
The same rationale follows at $\phi=\pi/2$.}
\begin{equation}
    \label{eq:MSL_zero_mean_quadratic_homodyne_minimised}
    \mathcal{L}[\mathcal{S}_{\mathcal{V}_\text{hom}}(\phi=0,\pi/2)] = \sigma_0^2 - \frac{4 \sigma_0^4}{3 e^{4\sigma_0^2}-1}.
\end{equation}

The relative MSL of the quadratic and PM estimators associated with this homodyne measurement are shown in \cref{fig:squeezing_estimation_zero_mean} as dashed [\cref{eq:MSL_zero_mean_quadratic_homodyne_minimised}] and solid [\cref{eq:MSL_general_PM_estimator}] blue curves, respectively.
Both strategies again converge towards the global minimum $\mathcal{L}(\mathcal{S})$, with the PM estimator outperforming the quadratic estimator in accordance with \cref{eq:MSL_chain_of_inequalities}.
Although the full basis $\mathcal{V}_\text{quad}$ combined with the PM estimator (solid green) achieves the lowest MSL overall, it requires an experimentally challenging non-Gaussian PVM. 
By contrast, homodyne detection with the PM estimator (solid blue) remains directly implementable and, at larger prior widths, outperforms the constrained estimator on the quadratic basis (dashed green) despite spanning a smaller subspace.
This does not, however, violate \cref{eq:MSL_chain_of_inequalities}, since these correspond to different subspaces.

\section{Discussion}
\label{sec:discussion}

We have developed a general framework in which the optimisation of the MSL over measurements and estimators for a CV system can be reduced to a finite linear system of equations in \cref{eq:S_constrained_opt_solutions} once the Hermitian operator encoding the measurement-estimator pair is constrained to a chosen operator subspace $\mathcal{V}=\operatorname{span}(B_1,\ldots,B_d)$.
For Gaussian states, a natural choice of basis $B_i$ consists of polynomials in the quadratures, whose moments are computable directly from the first and second moments of the state $\rho(\theta)$.
The solution to this linear system defines a projected SPM operator $\mathcal{S}_\mathcal{V}$, which is the orthogonal projection of the global optimum $\mathcal{S}$ onto the subspace $\mathcal{V}$ with respect to the $\rho_0$-weighted inner product in \cref{eq:inner_product_rho0}, and, like $\mathcal{S}$, defines a constrained strategy via its spectral decomposition.
The excess MSL $\mathcal{L}(\mathcal{S}_\mathcal{V})-\mathcal{L}(\mathcal{S})$ incurred by the constrained strategy is given by the squared distance $\norm{\mathcal{S}_\mathcal{V} - \mathcal{S}}^2_{\rho_0}$ in \cref{eq:absolute-MSL-difference}, and, for faithful $\rho_0$, \cref{theorem:1} identifies the polynomial subspaces $\mathcal{V}$ for which it vanishes.
Retaining the spectral measurement defined by $\mathcal{S}_\mathcal{V}$ but replacing the corresponding estimator with the PM can only reduce, or leave unchanged, the MSL at no additional experimental cost, with the same squared distance then serving as an upper bound on the excess.

Since closed-form expressions for the MSL are often difficult to obtain, this framework provides one of the few analytic toolboxes currently available for CV Bayesian estimation.
We expect it to be useful not only as a standalone method, but also as a benchmark or subroutine within numerical approaches such as those of Refs.~\cite{bavaresco2024designing,andre2026strategy}.

Our worked examples show that, depending on the choice of subspace $\mathcal{V}$, the resulting schemes are either exactly optimal or near-optimal.
For displacement estimation with a Gaussian prior, $\rho_0$ is Gaussian and \cref{theorem:1} implies that homodyne detection is exactly optimal for any prior width, recovering the result of Ref.~\cite{morelli2021bayesian} by a different route.
In contrast, for a uniform prior the state $\rho_0$ is no longer Gaussian, so \cref{theorem:1} no longer applies and the linear shrinkage estimator ceases to be optimal.
Nevertheless, it remains close to optimal, while including higher polynomial basis elements provides a systematic and tractable approximation to the Bayesian optimum.

For squeezing estimation of an undisplaced probe, the parameter is encoded solely in the covariance matrix.
Consequently, all moments $\mathrm{Tr}(\bar{\rho}\,\hat{q}_\phi^{2n+1})$ vanish for $n \geq 0$, so any $f$-transformed estimator odd in the homodyne outcome $q_\phi$ does not contribute to the MSL in \cref{eq:MSL_constrained_diagonal}.
For the full quadratic subspace $\mathcal{V}_\text{quad}$, we obtained closed-form expressions for the projected SPM operator $\mathcal{S}_\mathcal{V}$ and the corresponding MSL $\mathcal{L}(\mathcal{S}_\mathcal{V})$ [\cref{eq:constrained_SPM_squeezing_quadratic,eq:constrained_MSL_squeezing_quadratic}], although the associated PVM involves non-normalisable scattering states and is experimentally challenging to implement.
Restricting instead to the subspace $\mathcal{V}_\textrm{hom}$, corresponding to homodyne detection at angle $\phi$ with an estimator quadratic in the outcome, yields closed-form expressions for both the estimator \cref{eq:constrained_estimator_squeezing_quadratic} and the MSL \cref{eq:MSL_zero_mean_quadratic_homodyne}.
While the quadratic basis achieves a lower MSL throughout, the homodyne strategy remains directly implementable and provides a practically accessible near-optimal scheme.
In all cases, the strategies converge rapidly to the global optimum as the prior narrows, with the PM estimator providing a larger improvement at broader prior widths.

Several limitations are worth noting.
First, the linear system $G \vb*{\alpha}=\vb{b}$ requires the traces $\mathrm{Tr}(\rho_0 B_i B_j)$ and $\mathrm{Tr}(\bar{\rho} B_i)$ to be finite.
This is satisfied for polynomial bases and Gaussian states with standard priors, but must be checked independently in more general settings.
Second, the spectral decomposition of $\mathcal{S}_\mathcal{V}$ defines a physical projective measurement only when $\mathcal{S}_\mathcal{V}$ is self-adjoint.
While this is true for all examples considered here---including real polynomial bases in a single quadrature and real quadratic forms in the canonical quadratures---it is not guaranteed for higher-order bases that mix the quadratures, and must therefore be verified case by case.
Nevertheless, even when a spectral decomposition is unavailable, $\mathcal{L}(\mathcal{S}_\mathcal{V})$ in \cref{eq:MSL_constrained} remains a valid lower bound on the MSL achievable by any Hermitian operator within $\mathcal{V}$, although the bound need not be attainable.

Looking ahead, several extensions widen the scope of the framework.
Within the Gaussian setting, it accommodates any estimation scenario generated by Gaussian unitaries acting on Gaussian states, since these act as symplectic transformations on the first and second moments of the state as $\overline{\vb{r}} \mapsto \Sigma \overline{\vb{r}}$ and $V \mapsto \Sigma V \Sigma^\top$, with $\Sigma$ symplectic \cite{adesso2014continuous,serafini2017quantum,weedbrook2012gaussian}.
This covers, for instance, phase and squeezing estimation with displaced probes, as well as loss estimation and the characterisation of Gaussian noise and dissipation \cite{monras2007optimal,zhou2023bayesian,serafini2005quantifying,adesso2009optimal}.

More importantly, the framework is \emph{not} restricted to Gaussian states.
The only requirements are that $\rho_0$ and $\bar{\rho}$ are well defined and that the relevant operator moments exist, so that the Gram matrix $G$ and vector $\vb{b}$ in \cref{eq:G_and_b_def} are finite.
Since these conditions make no reference to the dimension of the Hilbert space, the optimisation applies equally to non-Gaussian CV probes and, with a suitable operator basis, to finite-dimensional systems.

Correspondingly, the choice of basis is open, and while polynomials in the quadratures are natural for Gaussian states, other Hermitian bases may be better suited to the problem at hand.
For a qubit, the identity together with the Pauli operators span the full space of Hermitian operators, so that $\mathcal{S}_\mathcal{V}=\mathcal{S}$ exactly.
In CV systems, bases built from Fock states offer an alternative to the quadratures; e.g., the diagonal subspace $\operatorname{span}\{\ketbra{j}\}_{j \leq n}$ describes photon counting up to $n$ quanta, equipped with its optimal Bayes estimator \cref{eq:posterior_mean_estimator}.

Finally, for Gaussian probes and polynomial bases, $\mathcal L(\mathcal S_\mathcal V)$ is an explicit function of the (prior-averaged) first and second moments of the state $\rho(\theta)$.
Probe optimisation can therefore be carried out directly within the same framework, providing a fully Bayesian analogue of state optimisation in local QET.
Another natural direction is multiparameter estimation, where the relationship between subspace choice and measurement incompatibility \cite{albarelli2025measurement} merits investigation.

\section*{Acknowledgements}

We gratefully thank L. Correa, D. Branford, F. Albarelli, R. Rodr\'{i}guez, and S. Nimmrichter  for insightful discussions. 
We acknowledge funding by the Ministerio de Ciencia e Innovaci\'{o}n and European Union (FEDER) (PID2022-138269NB-I00). 
E.G. is funded by the Engineering and Physical Sciences Research Council (Grant No. EP/W524451/1).
E. G. also acknowledges the University of La Laguna (ULL), and the Spanish Ministry of Universities for supporting the `QSB Visiting Program', during which this project started.
J. R. acknowledges financial support from the Surrey Future Fellowship (University of Surrey), during which this project was developed, and from the UKRI-JST project ``Quantum Control \& Sensing: Enhancing high-precision quantum sensing in noisy environments via optimised control''.

\section*{Data and code availability}
The code used to generate the numerical results and figures in this work, together with the plotted data, is openly available at \cite{gandar2026gaussian} under the MIT licence.
The repository also includes the symbolic computations verifying the analytical expressions reported here.
No experimental data were generated.

% References
\bibliographystyle{apsk}
\bibliography{refs}

\appendix

\section{Variational derivation of the constrained optimum}
\label[appendix]{appendix:alternate_proof_of_constrained_optimum}

\cref{eq:S_constrained_opt_solutions} may alternatively be derived via a variational argument.
Recall that minimising the MSL in \cref{eq:average_info_error_general_POVM} is equivalent to minimising the functional [\cref{eq:J-functional}]
\begin{equation}
    \label{eq:}
    J(\mathcal{M}_{1}) = \mathrm{Tr}(\rho_0 \mathcal{M}^2_{1}) - 2\mathrm{Tr}(\bar{\rho} \mathcal{M}_{1}),
\end{equation}
where \(\mathcal{M}_{1} \in \mathcal{V}\). 
For any $X \in \mathcal{V}$, the variation $\mathcal{M}_{1} + \varepsilon X$ gives
\begin{equation}
    \label{eq:appendix_MSL_variation}
    \begin{aligned}
        J(\mathcal{M}_{1} + \varepsilon X)  
        =\,& J(\mathcal{M}_{1}) 
        + \varepsilon [\mathrm{Tr}(\rho_0 \{\mathcal{M}_{1},X\})
        - 2 \mathrm{Tr}(\bar{\rho} X)]
        \\
        & 
        + \varepsilon^2 \mathrm{Tr}(\rho_0 X^2).
    \end{aligned}
\end{equation}
Now, a necessary condition for \(\mathcal{M}_{1}=\mathcal{S}_{\mathcal{V}}\) to be a stationary point of \(J\) within \(\mathcal{V}\) is
\begin{equation}
    \dv{}{\varepsilon} \eval{J(\mathcal{S}_\mathcal{V} + \varepsilon X) }_{\varepsilon = 0} =0,
    \quad 
    \forall X \in \mathcal{V}.
\end{equation}
This condition takes the form
\begin{equation}
    \label{eq:appendix_stationarity_condition}
     \mathrm{Tr}(\rho_0 \{\mathcal{S}_\mathcal{V},X\}) - 2 \mathrm{Tr}(\bar{\rho} X) = 0, \quad \forall X \in \mathcal{V}.
\end{equation}
Then, choosing the variations as the basis elements $X = B_i$ and expanding $\mathcal{S}_\mathcal{V}$ in this basis as
\begin{equation}
    \mathcal{S}_\mathcal{V} = \sum_j \alpha_j B_j,
\end{equation}
one finally obtains
\begin{equation}
    \sum_j \mathrm{Tr}(\rho_0 \{B_i, B_j\}) \alpha_j = 2 \mathrm{Tr}(\bar{\rho} B_i),
\end{equation}
which is the same set of linear equations in \cref{eq:S_constrained_opt_solutions}.

It remains to show that $\mathcal{S}_\mathcal{V}$ is a minimum rather than a stationary point.
The stationarity condition \cref{eq:appendix_stationarity_condition} forces the linear term in $J(\mathcal{S}_\mathcal{V} + \varepsilon X)$ to vanish, leaving
\begin{equation}
    J(\mathcal{S}_\mathcal{V} + \varepsilon X) = J(\mathcal{S}_\mathcal{V}) + \varepsilon^2 \mathrm{Tr}(\rho_0 X^2) \geq J(\mathcal{S}_\mathcal{V}),
\end{equation}
where the inequality holds for all $\varepsilon \in \mathbbm{R}$, since $\rho_0 \succeq 0$ implies $\mathrm{Tr}(\rho_0 X^2) \geq 0$.
As every element of $\mathcal{V}$ can be written as $\mathcal{S}_\mathcal{V} + \varepsilon X$ for some $X \in \mathcal{V}$, this confirms $\mathcal{S}_\mathcal{V}$ is a global minimum of $J$, and thus the full MSL, within the subspace $\mathcal{V}$.

This minimum is unique iff the Gram matrix $G$, defined in Eq.~\eqref{eq:gram-norm-version}, is positive definite.
To see this, write $X = \sum_i \alpha_i B_i$ for general coefficients $\vb*{\alpha} = (\alpha_1, \alpha_2, \dots)$ and note that
\begin{align}
    \mathrm{Tr}(\rho_0 X^2) 
    &= \sum_{ij} \alpha_i \alpha_j \mathrm{Tr}(\rho_0 B_i B_j)
    \\ 
    &= \frac{1}{2} \sum_{ij} \alpha_i \alpha_j \mathrm{Tr}(\rho_0 \{B_i, B_j\})
    \\
    &= \vb*{\alpha}^\top G \vb*{\alpha},
\end{align}
where the second equality uses symmetry of $\alpha_i \alpha_j$ under $i \leftrightarrow j$.
Hence, $\mathrm{Tr}(\rho_0 X^2) >0$ for all non-zero $X \in \mathcal{V}$ iff $G$ is positive definite.
If $G$ is singular, then there exists a non-zero $X$ with $\mathrm{Tr}(\rho_0 X^2)=0$, so $\mathcal{S}_\mathcal{V} + \varepsilon X$ is also a minimiser for any $\varepsilon \in \mathbbm{R}$.
In this case, the minimiser is not unique but one may select the minimum-norm solution via the Moore-Penrose pseudoinverse.

\section{Proof of Theorem 1}
\label[appendix]{appendix:polynomial_S_proof}
We restate \cref{theorem:1} here:
\textit{Let $\rho_0$ be a faithful Gaussian state.
Then $\mathcal{S}$ is a polynomial of degree $R$ in the quadrature operators if and only if $W_{\bar{\rho}}/W_{\rho_0}$ is a polynomial of degree $R$ in the phase-space coordinates $\vb{r}$.}

One way to prove this theorem is to work in phase space.
To do so, 
we consider the Wigner-Weyl transform \cite{curtright2014concise,folland2016harmonic,hall2013quantum,de2011symplectic}, which is an invertible mapping between operators $X$ acting on a Hilbert space $\mathcal{H}$ and functions $W_X(\vb{r})$, known as Weyl symbols, where $\vb{r} \in \mathbbm{R}^{2N}$ are phase-space coordinates.
For $X=\rho$, $W_\rho(\vb{r})$ is the Wigner function.
Furthermore, a polynomial operator of degree $R$ in the quadrature operators corresponds to a polynomial Weyl symbol of degree $R$ in phase-space coordinates. Conversely, every polynomial Weyl symbol defines a unique polynomial operator through Weyl quantisation (see \cite[Ch.~13]{hall2013quantum} and \cite[Prop.~2.11]{folland2016harmonic}).

The Weyl symbol of a product $A B$ is given by the Moyal (star) product of the symbols.
For our purposes, the most useful way of expressing this is the formal series \cite{curtright2014concise,moyal1949quantum,hillery1984distribution,groenewold1946principles,szabo2003quantum}
\begin{equation}
    \label{eq:moyal}
    \begin{aligned}
    W_{AB}(\vb{r}) =&\, (W_A \star W_B)(\vb{r}) \\
    =&\,  \sum_{m=0}^{\infty} \frac{1}{m!}\left(\frac{i\hbar}{2}\right)^m \Omega^{i_1 j_1}\cdots\Omega^{i_m j_m}\\ 
    &\times (\partial_{i_1}\cdots\partial_{i_m} W_A)(\partial_{j_1}\cdots\partial_{j_m} W_B),
    \end{aligned}
\end{equation}
where $\Omega$ is the symplectic form and $\partial_{i}:=\partial/\partial r_i$.
Here and in the following, we use Einstein summation notation throughout, and we set $\hbar=1$.

Applying the Weyl transform to the Lyapunov equation $\mathcal{S} \rho_0 + \rho_0 \mathcal{S} = 2 \bar{\rho}$ gives the equivalent phase-space equation
\begin{equation}
    \label{eq:lyapunov_moyal}
    W_{\mathcal{S}} \star W_{\rho_0} + W_{\rho_0} \star W_{\mathcal{S}} = 2 W_{\bar{\rho}}.
\end{equation}
Defining the map
\begin{equation}
    \label{eq:L_operator_def}
    \Lambda\left[g\right] := \frac{g \star W_{\rho_0} + W_{\rho_0} \star g}{2},
\end{equation}
acting on phase-space functions $g(\vb{r})$, \cref{eq:lyapunov_moyal} can be written as
\begin{equation}
    \Lambda\left[W_{\mathcal{S}}\right] = W_{\bar{\rho}} = \Phi W_{\rho_0},
\end{equation}
where $\Phi := W_{\bar{\rho}}/W_{\rho_0}$.

Only even-order terms in the Moyal series \cref{eq:moyal} survive in $\Lambda$ because the antisymmetry of $\Omega$ kills the odd-order contributions under the exchange of arguments. 
With $\gamma_m := (-1)^m/[(2m)!\,4^m]$, the map \cref{eq:L_operator_def} becomes
\begin{equation}
    \begin{aligned}
        \label{eq:appendix_weyl_lyapunov_full}
        \Lambda\left[g\right] = W_{\rho_0} \sum_{m= 0}^{\infty} \gamma_m  \Omega^{i_1 j_1} \cdots \Omega^{i_{2m} j_{2m}} \\ \times \left(\partial_{i_1} \cdots \partial_{i_{2m}} g\right)\left(\frac{\partial_{j_1} \cdots \partial_{j_{2m}} W_{\rho_0}}{W_{\rho_0}}\right).
    \end{aligned}
\end{equation}

The key technical result is that $\Lambda$ preserves polynomial degree against a Gaussian $W_{\rho_0}(\vb{r})$.

Let $P_R$ denote the space of polynomials of degree \emph{at most} $R$ in the phase-space coordinates $\vb{r} \in \mathbbm{R}^{2N}$, and define $F_R =  \left\{ P W_{\rho_0} | P \in P_R \right\}$ as the space of polynomials of degree \emph{at most} $R$ multiplied by $W_{\rho_0}$.

\begin{flushleft}
    \textbf{Lemma:} If $W_{\rho_0}$ is Gaussian, then $\Lambda$ maps $P_R$ into $F_R$. 
\end{flushleft}
\textbf{Proof:} Writing the Gaussian Weyl symbol of $\rho_0$ as
\begin{equation}
    W_{\rho_0}(\vb{r})=\frac{1}{(2\pi)^N \sqrt{\det V_0}} \exp \left(-\frac{1}{2}(\vb{r}-\overline{\vb{r}}_0)^{T} V_0^{-1}(\vb{r}-\overline{\vb{r}}_0)\right),
\end{equation}
the first two  derivatives
\begin{equation}
    \begin{aligned}
        \frac{\partial_i W_{\rho_0}}{W_{\rho_0}} &= -(V_0^{-1}(\vb{r}-\overline{\vb{r}}_0))_i, \\
        \frac{\partial_i \partial_j W_{\rho_0}}{W_{\rho_0}} &=(V_0^{-1}(\vb{r}-\overline{\vb{r}}_0))_i (V_0^{-1}(\vb{r}-\overline{\vb{r}}_0))_j - (V_0^{-1})_{ij},
    \end{aligned}
\end{equation}
are real polynomials in $\vb{r}$ of degree 1 and 2, respectively.

Applying the product rule inductively, we observe that $(\partial_{j_1} \cdots \partial_{j_{2m}} W_{\rho_0})/W_{\rho_0} \in P_{2m}$ for all $m \geq 0$.
For $g \in P_R$, the derivative $\partial_{i_1} \cdots \partial_{i_{2m}} g \in P_{R-2m}$ when $2m \leq R$ and vanishes otherwise, so that the series terminates.
As such, each term in \cref{eq:appendix_weyl_lyapunov_full} contributes a polynomial of total degree at most $(R-2m)+2m=R$, so $\Lambda\left[g\right] \in F_R$. \qed \\

We are now in a position to prove the two directions of \cref{theorem:1}.

\subsection{Forward direction}
Suppose $\mathcal{S}$ is a degree-$R$ polynomial in the quadratures, so $W_{\mathcal{S}} \in P_R$.
The Lemma then gives $\Lambda\left[W_{\mathcal{S}}\right] \in F_R$, and since $\Lambda\left[W_{\mathcal{S}}\right] = W_{\bar{\rho}}$, it follows that $\Phi := W_{\bar{\rho}}/W_{\rho_0} \in P_R$. \qed

\subsection{Converse direction}
Suppose that $\Phi = W_{\bar{\rho}}/W_{\rho_0} \in P_R$, or equivalently $W_{\bar{\rho}} \in F_R$.
We wish to show that $W_{\mathcal{S}} \in P_R$, which implies that $\mathcal{S}$ is a degree-$R$ polynomial.

We claim that the restriction $\Lambda|_{P_R} : P_R \to F_R$, obtained by restricting the domain of $\Lambda$ to $P_R$, is a bijective map.
This would imply, since $W_{\bar{\rho}} \in F_R$ by hypothesis, that there exists a unique polynomial Weyl symbol $W_\text{poly} \in P_R$ such that $\Lambda[W_\text{poly}]=W_{\bar{\rho}}$.
We will then verify that the corresponding operator $S_\text{poly}$ coincides with the SPM operator $\mathcal{S}$.

To start, note that, for a Gaussian $\rho_0$, $W_{\rho_0}(\vb{r}) \neq 0$ everywhere, so multiplication by $W_{\rho_0}$ is an isomorphism $P_R \to F_R$.
Furthermore, the dimensions of these spaces are equal and finite with $\operatorname{dim}(P_R)=\operatorname{dim}(F_R) = \binom{2N+R}{R} < \infty$.

In order to show that $\Lambda|_{P_R}$ is injective,
suppose $W_\text{poly} \in P_R$ satisfies $\Lambda[W_\text{poly}]=0$.
This means that the operator $S_\text{poly}$ with Weyl symbol $W_\text{poly}$ satisfies
\begin{equation}
    S_\text{poly} \rho_0 + \rho_0 S_\text{poly}=0.
\end{equation}
Sandwiching between eigenvectors $\{\ket{\psi_i}\}$ of $\rho_0$ gives
\begin{equation}
    (\lambda_i + \lambda_j)\mel{\psi_i}{S_\text{poly}}{\psi_j} = 0 \quad \forall i,j,
\end{equation}
where $\lambda_i$ denotes the corresponding eigenvalues of $\rho_0$.
The matrix elements above are well defined because the eigenstates $\ket{\psi_j}$ of a Gaussian $\rho_0$ are (possibly displaced and squeezed) Fock states, whose wavefunctions are Schwartz-class \cite[Thm.~11.4]{hall2013quantum} and lie in the domain of any polynomial operator in $\hat q,\hat p$.\footnote{The Schwartz class, which is the space of smooth, rapidly decaying functions, is preserved by polynomial operators in $\hat{q},\hat{p}$ and by displacements \cite[Sec.~1.3]{folland2016harmonic} and squeezing (a symplectic transformation) \cite[Prop.~4.27]{folland2016harmonic}.
Therefore $\ket{\psi_i} \in \operatorname{Dom}(S_\text{poly})$ for any polynomial operator, and $\mel{\psi_i}{S_\text{poly}}{\psi_j}$ is well-defined.}

Since $\rho_0$ is faithful, meaning $\operatorname{ker}(\rho_0)=\{0\}$, and because $\rho_0$ is positive semi-definite, then all $\lambda_i>0$.
Therefore $\lambda_i + \lambda_j >0$ forces $\mel{\psi_i}{S_\text{poly}}{\psi_j}=0$ for all $i,j$.
Since the eigenvectors $\{\ket{\psi_i}\}$ form a complete orthonormal basis for all of $\mathcal{H}$, this implies $S_\text{poly}=0$ and hence $W_\text{poly}=0$, so $\Lambda|_{P_R}$ is injective.

We now use this to show existence of a polynomial solution.
Since $\Lambda|_{P_R}$ is an injective linear map between finite-dimensional spaces of equal dimension, it is bijective.
Because $W_{\bar{\rho}} \in F_R$  by hypothesis, there exists a unique $W_\text{poly} \in P_R$ satisfying
\begin{equation}
    \Lambda[W_\text{poly}] = W_{\bar{\rho}}.
\end{equation}
Equivalently, the operator $S_\text{poly}$ is a polynomial of degree at most $R$ in the quadratures that solves the Lyapunov equation in \cref{eq:lyapunov}.

It remains to show that the SPM operator $\mathcal{S}$ coincides with $S_\text{poly}$.
Both satisfy the same Lyapunov equation $X \rho_0 + \rho_0 X = 2 \bar{\rho}$, so taking matrix elements in the $\{\ket{\psi_i}\}$ basis of $\rho_0$ gives
\begin{equation}
    (\lambda_i + \lambda_j)\mel{\psi_i}{X}{\psi_j} = 2\mel{\psi_i}{\bar{\rho}}{\psi_j}.
\end{equation}
Since $\rho_0$ is faithful, all $\lambda_i$ > 0 and therefore
\begin{equation}
    \label{eq:appendix:SPM_matrix_elements}
    \mel{\psi_i}{X}{\psi_j} = \frac{2\mel{\psi_i}{\bar{\rho}}{\psi_j}}{\lambda_i + \lambda_j}
\end{equation}
is uniquely determined for all $i,j$.
Applying this both to $X=\mathcal{S}$ and $X=S_\text{poly}$ yields
\begin{equation}
    \mel{\psi_i}{\mathcal{S}}{\psi_j} = \mel{\psi_i}{S_\text{poly}}{\psi_j}.
\end{equation}
Since $\{\ket{\psi_i}\}$ is a complete orthonormal basis for $\mathcal{H}$, this implies $\mathcal{S}=S_\text{poly}$ as operators on $\mathcal{H}$. \qed \\

\subsection{Exact-degree statement}
The forward and converse directions above establish that, for faithful Gaussian $\rho_0$, $W_{\mathcal{S}} \in P_R$ if and only if $\Phi \in P_R$, i.e. $\mathcal{S}$ is a polynomial of degree at most $R$ in the quadratures if and only if $\Phi$ is a polynomial of degree at most $R$ in the phase-space coordinates $\vb{r}$.
We now sharpen this to the exact-degree statement of \cref{theorem:1} using proof by contradiction.

Suppose $\mathcal{S}$ has degree exactly $R$, i.e., $W_{\mathcal{S}} \in P_R$ but $W_{\mathcal{S}} \notin P_{R-1}$. 
The forward direction gives $\Phi \in P_R$.
If $\Phi$ had a lower degree, i.e., $\Phi \in P_{R-1}$, the converse direction applied at degree $R-1$ would yield $W_{\mathcal{S}} \in P_{R-1}$, contradicting $W_{\mathcal{S}} \notin P_{R-1}$.
Hence $\Phi \notin P_{R-1}$, so $\Phi$ has degree exactly $R$.

On the other hand, suppose that $\Phi$ has degree exactly $R$.
The converse direction gives $W_{\mathcal{S}} \in P_R$.
If $W_{\mathcal{S}} \in P_{R-1}$, the forward direction applied at degree $R-1$ would yield $\Phi \in P_{R-1}$, contradicting $\Phi \notin P_{R-1}$.
Hence $W_{\mathcal{S}} \notin P_{R-1}$, so $\mathcal{S}$ has degree exactly $R$. \qed \\

Lastly, we note that $\rho_0$ is faithful in two practically relevant cases: (i) if each $\rho(\theta)$ is itself faithful (e.g., thermal probe states), then $\rho_0$ is faithful regardless of the prior; (ii) if $\rho(\theta)=\ketbra{\psi_\theta}$ are pure probe states, then $\rho_0$ is faithful provided that the vectors $\{\ket{\psi_\theta}\}$ span the full Hilbert space over the support of $p(\theta)$, meaning that there is no non-zero $\ket{\phi}$ such that $\braket{\phi}{\psi_\theta}=0 \; \forall \theta$ in the support of the prior.
This occurs, for example, in displacement estimation with coherent or vacuum probes and a Gaussian prior, since coherent states form an over-complete basis.

\section{Excess MSL and geometric interpretation of projected SPM operators}
\label[appendix]{appendix:proof_average_info_error_quadratic_inequality}

This appendix proves \cref{eq:average_info_error_constrained_difference_equality} and subsequently shows that $\mathcal{S}_\mathcal{V}$ is the orthogonal projection of the global SPM operator $\mathcal{S}$ onto the subspace $\mathcal{V}$.

For any Hermitian $X$, as per \cref{eq:lyapunov}, one may write
\begin{equation}
    \label{eq:lyapunov_rho0_inner_product}
    \expval{\mathcal{S},X}_{\rho_0} = \mathrm{Tr}(\bar{\rho} X ),
\end{equation}
where $\mathcal{S}$ denotes the SPM operator and the $\rho_0$-weighted inner product is defined in \cref{eq:inner_product_rho0}.
Accordingly, the functional \(J(X)\) defined in \cref{eq:J-functional} can be written as
\begin{equation}
    \label{eq:J-functional-app}
    J(X) = \expval{X,X}_{\rho_0} - 2\expval{\mathcal{S},X}_{\rho_0},
\end{equation}
and its evaluation at $\mathcal{S}$ as
\begin{equation}
    J(\mathcal{S}) = -\expval{\mathcal{S},\mathcal{S}}_{\rho_0}.
\end{equation}
Further observe that, since the inner product is linear in both arguments,
\begin{equation}
    \label{eq:aux-identity-app}
    \expval{X-\mathcal{S},X-\mathcal{S}}_{\rho_0} = \expval{X,X}_{\rho_0} - 2\expval{\mathcal{S},X}_{\rho_0} + \expval{\mathcal{S},\mathcal{S}}_{\rho_0}.
\end{equation}
Combining Eqs.~(\ref{eq:J-functional-app}--\ref{eq:aux-identity-app}) leads to
\begin{equation}
    J(X) = \expval{X-\mathcal{S},X-\mathcal{S}}_{\rho_0} + J(\mathcal{S}).
\end{equation}
Given that the MSL for the Hermitian \(X\) may be written as \(\mathcal{L}(X) = \lambda + J(X)\), the excess MSL is calculated as
\begin{align}
    \mathcal{L}(X) - \mathcal{L}(\mathcal{S}) &= \expval{X-\mathcal{S},X-\mathcal{S}}_{\rho_0}
    \\
    &= \norm{X-\mathcal{S}}^2_{\rho_0} \geq 0,
\end{align}
with equality iff $X-\mathcal{S}$ vanishes on the support of $\rho_0$.
This proves \cref{eq:average_info_error_constrained_difference_equality} in the main text.

We next justify why 
\begin{equation}
    \mathcal{S}_\mathcal{V} = \underset{X \in \mathcal{V}}{\text{argmin}} \: \norm{X-\mathcal{S}}^2_{\rho_0}
\end{equation}
corresponds to a projection of the SPM operator $\mathcal{S}$ onto the subspace $\mathcal{V}$.
First, rewrite the stationary condition in \cref{eq:appendix_stationarity_condition} as $\expval{\mathcal{S}_\mathcal{V},X}_{\rho_0}=\mathrm{Tr}(\bar{\rho} X)$.
Subtracting \cref{eq:lyapunov_rho0_inner_product} gives
\begin{equation}
    \label{eq:appendix_stationarity_condition2}
    \expval{\mathcal{S}-\mathcal{S}_\mathcal{V},X}_{\rho_0} = 0 \quad \forall X \in \mathcal{V},
\end{equation}
i.e., $\mathcal{S}-\mathcal{S}_\mathcal{V}$ is orthogonal to all of $\mathcal{V}$.
This is the characteristic property of an orthogonal projection \cite{kreyszig1991introductory}.

To see uniqueness, suppose both $\mathcal{S}_\mathcal{V}$ and $\mathcal{S}_\mathcal{V}'$ are elements of $\mathcal{V}$ with $\mathcal{S}-\mathcal{S}_\mathcal{V}$ and $\mathcal{S}-\mathcal{S}_\mathcal{V}'$ both orthogonal to all of $\mathcal{V}$.
Then their difference $\mathcal{S}_\mathcal{V}-\mathcal{S}_\mathcal{V}' \in \mathcal{V}$ satisfies
\begin{equation}
    \expval{\mathcal{S}_\mathcal{V}-\mathcal{S}_\mathcal{V}',X}_{\rho_0}=\expval{(\mathcal{S}-\mathcal{S}_\mathcal{V}')-(\mathcal{S}-\mathcal{S}_\mathcal{V}),X}_{\rho_0} =0,
\end{equation}
for all $X \in \mathcal{V}$.
So $\mathcal{S}_\mathcal{V}-\mathcal{S}_\mathcal{V}'$ is orthogonal to all of $\mathcal{V}$.
In particular, it is orthogonal to itself, and $||\mathcal{S}_\mathcal{V}-\mathcal{S}_\mathcal{V}'||^2_{\rho_0}=0$ implies that $\mathcal{S}_\mathcal{V}=\mathcal{S}_\mathcal{V}'$ on the support of $\rho_0$.

Since $\mathcal{S}_\mathcal{V} \in \mathcal{V}$ is the unique element satisfying the condition in \cref{eq:appendix_stationarity_condition2}, we may write $\mathcal{S}_\mathcal{V}=\Pi_\mathcal{V}^{(\rho_0)}\mathcal{S}$ with projector $\Pi_\mathcal{V}^{(\rho_0)}$.
This gives the decomposition 
\begin{equation}
    \mathcal{S}=\Pi_\mathcal{V}^{(\rho_0)}\mathcal{S}+\big[\mathbbm{1}-\Pi_\mathcal{V}^{(\rho_0)}\big]\mathcal{S},    
\end{equation}
where the two parts are orthogonal w.r.t the $\rho_0$-weighted inner product 
\begin{equation}
    \big\langle\Pi_\mathcal{V}^{(\rho_0)}\mathcal{S},\big[\mathbbm{1}-\Pi_\mathcal{V}^{(\rho_0)}\big]\mathcal{S}\big\rangle_{\rho_0}=\expval{\mathcal{S}_\mathcal{V},\mathcal{S}-\mathcal{S}_\mathcal{V}}_{\rho_0}=0,
\end{equation}
taking $X=\mathcal{S}_\mathcal{V} \in \mathcal{V}$ in \cref{eq:appendix_stationarity_condition2}.

\end{document}